\newcommand*\circled[1]{\tikz[baseline=(char.base)]{
            \node[draw,circle,inner sep=0.5pt] (char) {\textcolor{black}{#1}};}}
\begin{document}

\title[High-performance Effective Scientific Error-bounded Lossy Compression with Auto-tuned Interpolation]{High-performance Effective Scientific Error-bounded Lossy Compression with Auto-tuned Multi-component Interpolation}

\author{Jinyang Liu}
\affiliation{%
  \institution{University of California, Riverside}
  \city{Riverside}
  \state{CA}
  \country{USA}}
\email{jliu447@ucr.edu}

\author{Sheng Di}
\authornote{Corresponding author}
\affiliation{%
  \institution{Argonne National Laboratory}
  \city{Lemont}
  \state{IL}
  \country{USA}}
\email{sdi1@anl.gov}

\author{Kai Zhao}
\affiliation{%
  \institution{Florida State University}
  \city{Tallahassee}
  \state{FL}
  \country{USA}}
\email{kzhao@cs.fsu.edu}

\author{Xin Liang}
\affiliation{%
  \institution{University of Kentucky}
  \city{Lexington}
  \state{KY}
  \country{USA}}
\email{xliang@cs.uky.edu}

\author{Sian Jin}
\affiliation{%
  \institution{Indiana University Bloomington}
  \city{Bloomington}
  \state{IN}
  \country{USA}}
\email{sianjin@iu.edu}

\author{Zizhe Jian}
\affiliation{%
  \institution{University of California, Riverside}
  \city{Riverside}
  \state{CA}
  \country{USA}}
\email{zjian106@ucr.edu}

\author{Jiajun Huang}
\affiliation{%
  \institution{University of California, Riverside}
  \city{Riverside}
  \state{CA}
  \country{USA}}
\email{jhuan380@ucr.edu}

\author{Shixun Wu}
\affiliation{%
  \institution{University of California, Riverside}
  \city{Riverside}
  \state{CA}
  \country{USA}}
\email{swu264@ucr.edu}

\author{Zizhong Chen}
\affiliation{%
  \institution{University of California, Riverside}
  \city{Riverside}
  \state{CA}
  \country{USA}}
\email{chen@cs.ucr.edu}

\author{Franck Cappello}
\affiliation{%
  \institution{Argonne National Laboratory}
  \city{Lemont}
  \state{IL}
  \country{USA}}
\email{cappello@mcs.anl.gov}

\renewcommand{\shortauthors}{Jinyang Liu, et al.}

\begin{abstract}

Error-bounded lossy compression has been identified as a promising solution for significantly reducing scientific data volumes upon users' requirements on data distortion. For the existing scientific error-bounded lossy compressors, some of them (such as SPERR and FAZ) can reach fairly high compression ratios and some others (such as SZx, SZ, and ZFP) feature high compression speeds, but they rarely exhibit both high ratio and high speed meanwhile. In this paper, we propose HPEZ (a.k.a. QoZ 2.0) with newly designed interpolations and quality-metric-driven auto-tuning, which features significantly improved compression quality upon the existing high-performance compressors, meanwhile being exceedingly faster than high-ratio compressors. The key contributions lie as follows: (1) We develop a series of advanced techniques such as interpolation re-ordering, multi-dimensional interpolation, and natural cubic splines to significantly improve compression qualities with interpolation-based data prediction. (2) The auto-tuning module in HPEZ has been carefully designed with novel strategies, including but not limited to block-wise interpolation tuning, dynamic dimension freezing, and Lorenzo tuning. (3) We thoroughly evaluate HPEZ compared with many other compressors on six real-world scientific datasets. Experiments show that HPEZ outperforms other high-performance error-bounded lossy compressors in compression ratio by up to 140\% under the same error bound, and by up to 360\% under the same PSNR. In parallel data transfer experiments on the distributed database, HPEZ achieves a significant performance gain with up to 40\% time cost reduction over the second-best compressor.

\end{abstract}

\begin{CCSXML}
<ccs2012>
   <concept>
       <concept_id>10002951.10002952.10002971.10003451.10002975</concept_id>
       <concept_desc>Information systems~Data compression</concept_desc>
       <concept_significance>500</concept_significance>
       </concept>
   <concept>
        <concept_id>10003752.10003809.10010031.10002975</concept_id>
        <concept_desc>Theory of computation~Data compression</concept_desc>
        <concept_significance>500</concept_significance>
        </concept>
   <concept>
       <concept_id>10002950.10003714.10003715.10003722</concept_id>
       <concept_desc>Mathematics of computing~Interpolation</concept_desc>
       <concept_significance>500</concept_significance>
       </concept>
   
 </ccs2012>
\end{CCSXML}

\ccsdesc[500]{Information systems~Data compression}
\ccsdesc[500]{Theory of computation~Data compression}
\ccsdesc[500]{Mathematics of computing~Interpolation}

\keywords{error-bounded lossy compression, interpolation, scientific database}
\settopmatter{printfolios=true}
\maketitle
\makeatletter \gdef\@ACM@checkaffil{} \makeatother
\setlength{\textfloatsep}{6pt}
\section{Introduction}
\label{sec:introduction}

The gigantic scale and exceptionally intense computation power of modern supercomputers have empowered the exascale scientific simulation applications to generate tremendous amounts of data in short periods, bringing up significant burdens for distributed scientific databases and cloud data centers. For instance, A one-trillion particle Hardware/Hybrid Accelerated Cosmology Code (HACC) \cite{habib2016hacc} can harness approximately 22PB output data in a single simulation, and Community Earth System Model (CESM) \cite{cesm} simulation may generate 2.5PB data for a simulation task \cite{straatsma2017exascale}. To this end, error-bounded lossy compression techniques have been developed for those scientific data, and they have been recognized as the most proper strategy to manage the extremely large amount of data. The advantage of error-bounded lossy compression is primarily two-fold. On the one hand, it can reduce the original data to an incredibly shrunken size which is much smaller than the compressed data size generated by a lossless compressor. On the other hand, the error-bounded lossy compression can constrain the point-wise data distortion strictly upon the users' requirements. Existing state-of-the-art error-bounded lossy compressors in diverse archetypes, such as prediction-based model -- SZ3 \cite{szinterp,sz3} and QoZ \cite{qoz}, transform-based model -- ZFP \cite{zfp} and SPERR \cite{SPERR}, and dimension-reduction-based model -- TTHRESH \cite{ballester2019tthresh}, have been widely adopted in many use cases in practice. 

Considering the abundant scope of related optimization strategies, we summarize the existing error-bounded lossy compressors as well as their pros and cons as follows. The orthogonal transform-based compressors like ZFP, exhibit high execution speeds but their compression ratios are limited to a certain extent because they focus on only local correlations (confined within $4^d$-blocks). The wavelet-based compressors such as SPERR and the Singular Value Decomposition (SVD) based compression such as TTHRESH, although can obtain quite high compression ratios, suffer from very low compression speeds attributed to their high-cost integrated data operation modules. Some prediction-based compressors (e.g. SZ3 and QoZ) deliver relatively high compression ratios with moderate running speeds, nevertheless, they may suffer from relatively low compression ratios in some cases. Recently, FAZ \cite{liu2023faz} attempted to create a hybrid framework taking advantage of heterogeneous compression techniques, however, its design fully orients the optimization of rate-distortion, so that its compression/decompression is much slower than the classic compressors such as SZ and ZFP. 

For modern scientific databases and cloud data centers which often involve multiple sites over a wide area network (WAN), the extremely large amount of raw data costs an unacceptable time to transfer between machines. Therefore, data compressors are critical for efficient data transfer because transferring compressed data will significantly reduce the time cost, as confirmed by prior research \cite{ftsz,globus-compression}. In this case, compression ratios and speeds are both critical for achieving high data transfer throughput. However, designing a versatile error-bounded lossy compressor that delivers high compression ratios with sufficient performance (i.e. speed) is quite challenging. On one hand, to reach a high compression performance, general techniques have to perform relatively simple data transform \cite{zfp} or prediction within short-range areas \cite{Xin-bigdata18,szinterp,MGARD}, which cannot take advantage of long-range data correlations, thus leading to very limited compression ratios inevitably. On the other hand, to reach a high compression ratio, general techniques are applying sophisticated techniques such as wavelet transform on the full data input \cite{liu2023faz, SPERR} or higher-order SVD \cite{ballester2019tthresh}, which suffer from very expensive operations inevitably, conflicting with our high-performance objective. As such, we must design more compact and effective data operation methods with relatively low computational costs, featuring high speed, meanwhile yielding comparable compression ratios compared to the existing high-ratio compression techniques.

In order to design an error-bounded compressor that features both high compression ratios and satisfactory speeds, we propose an optimized quality-metric-driven error-bounded lossy compressor HPEZ (also known as QoZ 2.0 as it serves as the second major version of QoZ) by developing a brand-new auto-tuning strategy and an anchor-based level-wise hybrid interpolation predictor. Integrating extensively optimized interpolation predictors and auto-tuning modules, HPEZ attains far better compression ratios and lower distortions than other high-performance error-bounded lossy compressors with limited compression speed degradation. HPEZ substantially outperforms high-ratio compressors in terms of speed. It achieves optimized throughput performance in a variety of use cases such as parallel data transfer for large (distributed) databases. We attribute our contributions as follows:
\begin{itemize}
    \item Founded on theoretical analysis and algorithmic optimizations, we substantially upgrade the most critical step in the quality-oriented compression -- interpolation prediction, leading to an immensely improved data prediction accuracy.
    \item We develop a series of optimization strategies including block-wise interpolation tuning, dynamic dimension freezing, and Lorenzo tuning, which can substantially improve the adaptability of the auto-tuning for the compression across a broad spectrum of inputs.
    \item We perform solid experiments using 6 real-world scientific datasets. HPEZ significantly outperforms state-of-the-art error-bounded lossy compressors in terms of rate-distortion, while still having a satisfactory speed. It preserves a leading speed compared to other high-ratio compressors. Consequently, it achieves the best throughput in distributed data transfer over WAN based on our experiments. HPEZ exhibits the least time cost in data transfer for most scientific datasets with up to 40\% time reduction.
\end{itemize}

The remainder of this paper is organized as follows: Section \ref{sec:related} introduces related works. Section \ref{sec:backpro} provides the research background and the research problem formulation. Section \ref{sec:framework} demonstrates the overall framework of HPEZ. The new designs of interpolation predictors in HPEZ are illustrated in detail in Section \ref{sec:interp}, and our designed auto-tuning blocks are proposed in Section \ref{sec:autotuning}. In section \ref{sec:evaluation}, the evaluation results are presented and analyzed. Finally, Section \ref{sec:conclusion} concludes our work and discusses future work. 

\section{Related Work}
\label{sec:related}

In general, scientific data compression techniques can be divided into two categories - lossless compression and lossy compression. Examples of existing lossless compressors for databases are Gorilla \cite{gorilla} and AMMO \cite{icde_time_series_compression} for time-series data, and traditional lossy data compression methods include ModelarDB \cite{jensen2018modelardb,kejser2019scalable} for time-series data and \cite{isabela,zhang2018trajectory,fang20212,li2018deep} for Geology spatial-temporal data. Besides that, error-bounded lossy compression has been preferred and crafted to serve various scientific data reduction applications \cite{use-case} and scientific databases. To meet the requirement of scientists, the error-bounded lossy compression needs to constrain the point-wise compression errors within a certain value, which differs from compression techniques for traditional data such as JPEG-2000 \cite{taubman2002jpeg2000} for image data and h.265 \cite{sullivan2012overview} for video data. The error-bounded scientific compressors are classified into four main categories: prediction-based, transform-based, dimension-reduction-based, and neural-network-based. They also essentially utilize approaches to manage the data distortion in line with user-specified error bounds.

The prediction-based compressors use data prediction techniques, like linear regression \cite{Xin-bigdata18} and dynamic spline interpolations \cite{szinterp}, to anticipate the data points. Well-known examples are SZ2 \cite{Xin-bigdata18} and SZ3 \cite{sz3,szinterp}. Transform-based compressors, on the other hand, use data transformations to de-correlate the data, then switch to compress the more compressible transformed coefficients. ZFP \cite{zfp}, for example, is a typical example that employs exponent alignment, orthogonal discrete transform, and embedded encoding. SPERR \cite{SPERR}, a more recent work, leverages wavelet transform for data compression. Dimension-reduction-based compressors apply dimension reduction techniques, with (high-order) singular vector decomposition (SVD) being a case in point (for instance, TTHRESH \cite{ballester2019tthresh}). Neural-network-based compressors \cite{ae-sz,glaws2020deep,liu2021high,hayne2021using} utilize neural network models like the autoencoder family \cite{ae,vae,swae}, however, the speeds of them and relatively quite slow.

The aforementioned compressors each have their strengths and weaknesses, depending on the nature of the input data and user needs. To enhance scientific error-bounded lossy compression, two emerging approaches are raised to further refine the specialization of the compressor or to boost its versatility. Regarding compressor specialization, MDZ \cite{mdz}, a prediction-based compressor, is specifically tailored for molecular dynamics simulation data. SZx \cite{szx} offers low-ratio lossy compression at incredibly high speeds. CuSZ \cite{cusz}, CuSZ+ \cite{cusz+} and FZ-GPU \cite{FZGPU} delve into GPU-based scientific lossy compression to quicken the compression process. \cite{jiao2022toward} aims at maintaining the quantities of interest (QoI) of the input data. When it comes to enhancing the versatility of lossy compressors, QoZ \cite{qoz} integrates user-specified quality metric optimization targets and anchor-point-based level-wise interpolation auto-tuning into the SZ3 compression framework. This can effectively improve the compression quality with limited speed degradation. FAZ \cite{liu2023faz}, a hybrid compression framework, combines diverse compression techniques and adaptively generates the compression pipeline for varying inputs, while suffering from low compression speed.

With all those evolving works taken into insight, there is still a lack of broad-spectrum scientific error-bounded lossy compressors that can achieve both top-tier compression quality and adequate compression speed. In this paper, our proposed solution endeavors to fill this gap: we pursue both high compression quality (by optimizing the rate-distortion) and high execution throughput across a wide range of scientific datasets. 

\section{Problem Formulation and Analysis}
\label{sec:backpro}

In this section, we mathematically formulate our research target and then present the fundamental analysis for addressing the target. With those analyses, we can determine the best-fit archetype for the to-be-proposed compressor HPEZ.  

\subsection{Problem Formulation}
The target of HPEZ is to jointly optimize the compression ratio and the user-specified quality metrics (PSNR, SSIM, etc.). Moreover, the proposed new compressor is expected to have relatively high compression and decompression speeds and be well-adapted to diverse types of input data (integer and floating point, single-dimensional and multi-dimensional, and so on).

Eq. \ref{eq:problem} is the formulated research target in this paper. A compressor $C$ and a decompressor $D$ compose the error-bounded lossy compression framework, together with their configuration parameters (denoted by $\theta$). With the input data (denoted by $X$) and a user-specified absolute error bound $e$, the compression framework generates compressed data (denoted as $Z=C_\theta(X)$) and the decompressed data (denoted as $X^{'}=D_\theta(Z)$), which should strictly respect the error bound (denoted $e$) point-wisely. Under those mandatory conditions, HPEZ determines $C$, $D$, and $\theta$ by optimizing the compression ratio under a user-specified quality metric requirement (denoted as $m_0$). Each quality metric corresponds to (and is calculated from) a function $M$, which can be chosen from PSNR, SSIM, a constant function (in case no quality but just compression ratio is concerned), etc. 
Moreover, to ensure the applicability of our proposed compressor for various use cases, we would like the proposed compressor to become a high-performance compressor (including SZ3, QoZ, et al.) having an overall execution speed of at least comparable to SZ3.

\begin{equation}
\label{eq:problem}
\begin{split}
&C,D,\theta=\underset{C,D,\theta}{\arg\max} \hspace{1mm}\frac{|X|}{|Z|} \\
s.t. \ &|x_i-x_i^{'}|\leq e, \forall x_i \in X \\
\ & M(X,X^{'})=m_0 \\
\end{split}
\end{equation}

\subsection{Determining the Best-fit Compressor Archetype for HPEZ}
\label{sec:analysis}
As mentioned before, our proposed compressor should exhibit both good rate-distortion and relatively high speeds. To this end, we need to investigate existing scientific error-bounded lossy compressors to identify the best-fit compressor archetype for our design. The categorization of compressors is priorly discussed in Section \ref{sec:introduction} and Section \ref{sec:related}, but to conduct a deeper analysis here we categorize the existing compressors into more types according to their designs:

\begin{itemize}
    \item Hybrid-data-prediction-based: Applying multiple data predictors for data prediction and reconstruction, such as regressors and Lorenzo predictors \cite{Xin-bigdata18,sz-auto}.
    \item Interpolation-based: Leveraging interpolations for prediction-based data compression \cite{szinterp,qoz}.
    \item Discrete-orthogonal-transform-based: Making use of block-wise Discrete Orthogonal Transform and embedded coding in the compression \cite{zfp}.
    \item Wavelet-transform-based: Combining wavelet transforms and coefficient encoding methods for compression \cite{SPERR,liu2023faz}.
    \item SVD-based: In TTHRESH \cite{ballester2019tthresh}, high-order singular value decomposition is the core of its data processing techniques.
    \item Deep-learning-based: Quite a few deep-learning-based error-bounded lossy compressors have been proposed. Among them, there are autoencoder-based ones \cite{ae-sz,hayne2021using} and \newline coordinate-network-based ones \cite{han2022coordnet,lu2021compressive}.
\end{itemize}

Several existing works \cite{qoz,liu2023faz,ae-sz} have also conducted systematic and thorough experimental analyses of those compressors in diverse types, having tested them in multiple aspects including and not limited to execution speeds, rate-distortion, and practical use cases (e.g. I/O throughput). We conclude their findings as follows:
\begin{itemize}
    \item Despite their great potential in achieving high compression ratios, wavelets-based and SVD-based compressors suffer from low compression speeds due to high computational costs. With fixed data processing strategies, certain examples of them such as SPERR and TTHRESH also fail to perform well in terms of rate-distortion on some data inputs.
    \item Discrete-orthogonal-transform-based ZFP has a very high compression efficiency, but it only presents quite limited compression ratios.
    \item The practicality of current deep-learning-based compressors is also not satisfactory. The networks integrated into them either need per-data online training (for each compression task) or large sizes of training data from the same application for pre-training. This fact greatly damages the availability and efficiency of deep-learning-based compressors.
    \item Compared with others, prediction-based compressors (including hybrid-data-prediction-based and interpolation-based ones) have the advantage of achieving both good compression ratios and acceptable compression speeds. Among them, interpolation-based compressors such as SZ3 \cite{sz3} and QoZ \cite{qoz} optimize the compression rate-distortion. In the experiments carried out by \cite{qoz}, QoZ shows the best performance in the parallel I/O throughput tests.
\end{itemize}

According to our research target and the pros and cons of existing compressor archetypes, we develop a novel high-performance effective compressor namely HPEZ based on the interpolation-based compressor design. In Section \ref{sec:framework}, \ref{sec:interp}, and \ref{sec:autotuning}, we will fully demonstrate the design details of HPEZ, including the research background and newly proposed features.

\section{HPEZ Design Overview}
\label{sec:framework}
In this section, we propose an overview of the HPEZ compressor. As an interpolation-based scientific error-bounded lossy compressor, HPEZ is designed for structured data grids in types of floating points and integers. HPEZ is adaptive to either one-dimensional (1D) or multi-dimensional (2D, 3D, 4D ...) inputs, and exploits the dimension-wise spatial correlations and smoothness of them. 
HPEZ also has the potential to be applied to other domains including image and video because those data are also formatted as (or can be transformed into) structured data grids. The compression framework of HPEZ is illustrated in Figure \ref{fig:framework}. HPEZ takes advantage of the SZ3 modular framework \cite{sz3}, which contains the auto-tuning module, data prediction module, error quantization module, Huffman encoding module, and the Zstd lossless module. The detailed demonstration of the HPEZ compression pipeline is as follows:
\begin{itemize}
    \item \textbf{Step 1: Auto-tuning}. With a user-specified quality metric optimization target, HPEZ first auto-tunes its predictor configurations, which will be featured in Section \ref{sec:autotuning}.
    \item \textbf{Step 2: Data prediction}: HPEZ applies the auto-tuned data predictor on the whole input, acquiring the prediction errors.
    \item \textbf{Step 3: Linear quantization (error control)}: A linear error quantization module quantizes the data prediction errors in step 2 to control the element-wise decompression error. For example, for each data value $x$ and its prediction $x^{'}$, the original error is $e=x-x^{'}$ and the quantized error $e_q$ satisfies $|e_q-e|<=\epsilon$ ($\epsilon$ is the error bound). In this way, we can use $x^{'}+e_q$ as the decompression of $x$ which is bounded by $\epsilon$.
    \item \textbf{Step 4: Huffman encoding}: The quantized prediction errors acquired from Step 3 are further encoded with Huffman encoding. A more concentrated distribution of quantization errors will lower the encoded tree size, therefore the reduction of prediction error is key to improving the compression ratio.
    \item \textbf{Step 5: Lossless postprocessing}: The encoded quantized errors and other metadata are losslessly compressed by Zstd \cite{zstd} to further reduce the compressed size.
\end{itemize}
  HPEZ leverages existing modules in stereotype prediction-based error-bounded compression model (orange ones in Figure \ref{fig:framework}) and interpolation techniques (yellow ones in Figure \ref{fig:framework}). Most importantly, our HPEZ framework introduces several new modules and significantly improved components (as marked in blue and pink), including interpolation designs and auto-tuning techniques. In the data prediction module and the auto-tuning module, new designs have been incorporated in HPEZ to enhance the compression rate-distortion substantially. With those new designs, first, we have significantly improved the interpolation-based data predictors in HPEZ, introducing multiple refinements upon the existing dynamic spline interpolation; Second, the auto-tuning module of HPEZ has also been facilitated with new components for handling new interpolation configurations and boosting adaptability for more datasets. Third, the compression speed of HPEZ still maintains at a high level, empowering it to well-fit efficiency-oriented tasks. Those newly proposed designs will be demonstrated in Section \ref{sec:interp} and Section \ref{sec:autotuning}.
\begin{figure}[ht]
  \centering
  \vspace{1mm}
  \raisebox{-1mm}{\includegraphics[scale=0.55]{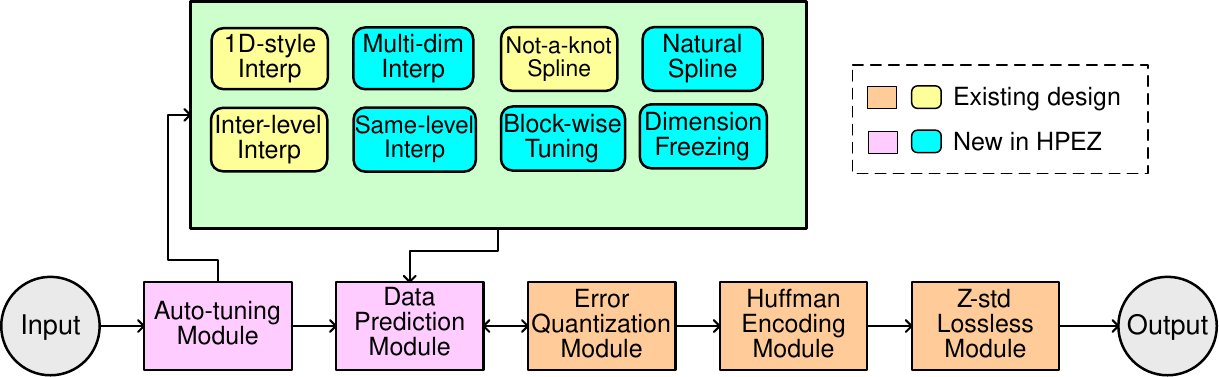}}
  \vspace{-1mm}
  \caption{HPEZ framework}
  \label{fig:framework}
\end{figure}

\section{HPEZ Interpolation-based Predictor}
\label{sec:interp}

In this section, we describe the details of our fine-tuned multi-component interpolation-based data predictor for HPEZ. Compared to the existing interpolation-based predictors, the HPEZ interpolation-based predictor projects a significant improvement over them, attributed to several new components we designed and proposed. These components can obtain a significantly improved prediction accuracy, thus leading to much better rate distortions in the compression. Those new designs together with the existing interpolation designs will be described in the rest of this section and will get auto-tuned for optimization of compression quality (to be detailed in Section \ref{sec:autotuning}). 
\subsection{Overview of Interpolation-based Prediction}
\label{sec:interpoverview}
The interpolation-based data prediction and reconstruction in HPEZ follow the hierarchical anchor-based level-wise dynamic spline interpolation concept, whose prototype was first proposed in SZ3 \cite{szinterp} and then developed in QoZ \cite{qoz}. Figure \ref{fig:interp} presents the interpolation-based data prediction process in the QoZ compressor. Initialized with a sparse losslessly-saved grid, on each interpolation level, the predictor expands the predicted/reconstructed data grid by 2$\times$ (on each dimension), until all data points are predicted/reconstructed. The interpolations with larger strides are performed at higher levels, and the interpolation stride reduces (halved) as the level goes down. We refer the readers to read \cite{qoz} for details. The key features of QoZ level-wize interpolation method include:
\begin{itemize}
    \item Storing anchor points losslessly (with a fixed anchor stride);
    \item The interpolations are done hierarchically (level by level), from large strides (half of the anchor stride) to small strides (1). 
    \item Each level may have different error bounds. Higher levels have smaller error bounds, and the last level always follows the input global error bound. 
    \item Leveraging both linear (first-order) and cubic (third-order) 1-D spline interpolation;
    \item Performing the interpolation along each dimension;
    \item Selecting the best-fit interpolation method for each level; 
    \item Auto-tuning and applying different error-bound values dynamically for different levels; 
\end{itemize}
\begin{figure}[ht]
  \centering
  \raisebox{-1mm}{\includegraphics[scale=0.5]{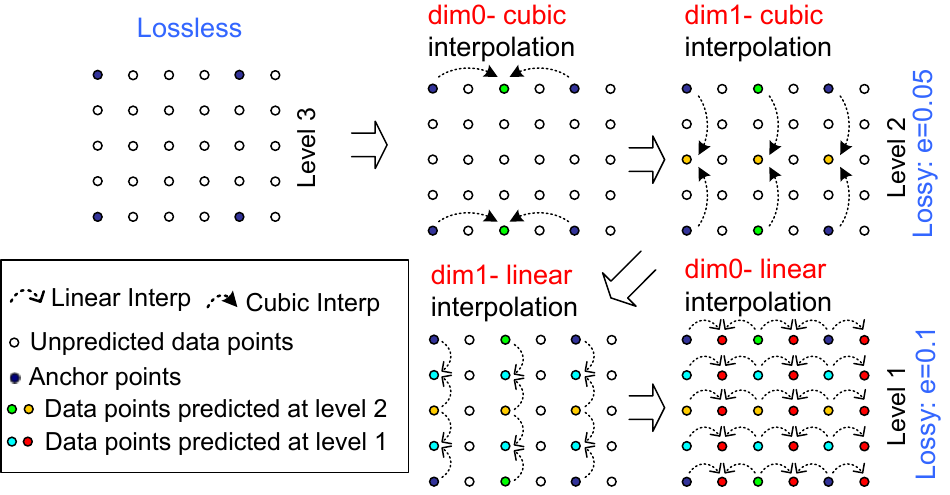}}
  \vspace{-2mm}
  \caption{The anchor-based level-wise dynamic spline interpolation.}
  \label{fig:interp}
\end{figure}

Such an anchor-based level-wise interpolation prediction features three critical advantages. (1) The prediction has a very low time complexity: $O$($N$), where $N$ is the total number of data points in the input dataset.
This is because, for the prediction of each data point, the interpolation is executed just once with an upper-bounded number of neighbor points (e.g. for SZ3/QoZ the upper-bound is 4), and the quantization of its prediction error is also completed in constant time. (2) The level-wise design allows it to set various error bounds at different levels to minimize the negative impact of data compression errors in the data prediction. 
(3) The design of anchor points avoids inaccurate large-stride interpolations, maintaining its prediction accuracy at a relatively high level.

Although the interpolation-based prediction in HPEZ is built upon QoZ, HPEZ proposes several key improvements that significantly boost its prediction accuracy over QoZ, including:
\begin{itemize}
    \item The natural cubic spline function;
    \item The multi-dimensional spline interpolation;
    \item Re-ordering of the interpolations.
\end{itemize}
Next, we will take a deep insight into the interpolation-based prediction in HPEZ, thoroughly demonstrating both the backgrounds and the new characteristics.

\subsection{Spline Interpolation Formulas}
\label{sec:natspline}
All interpolations in HPEZ are based on certain spline interpolation formulas, which interpolate each data point with its neighbors along one dimension. As mentioned in Section \ref{sec:interpoverview}, the spline interpolation formulas are categorized into linear spline interpolation and cubic spline interpolation. Illustrated in Figure \ref{fig:illu of 1d cubic spline}, the data value $d_i$ on index $i$ is going to be predicted by a prediction $p_i$ with the known data points $d_{i-3}$, $d_{i-1}$, $d_{i+1}$, and $d_{i+3}$ in its neighbours. The linear spline interpolation just applies 2 of them with the following formula:
\begin{equation}
\label{eq:lininterp}
\begin{array}{l}
p_i=\frac{1}{2}d_{i-1}+\frac{1}{2}d_{i+1}
\end{array}
\end{equation}

The cubic spline interpolation formulas leverage all the 4 neighbor points, and the formulas are deducted from 3 cubic spline functions ($f_{1}(x)$, $f_{2}(x)$, and $f_{3}(x)$):
\begin{figure}[ht]
  \centering
  \raisebox{-1cm}{\includegraphics[scale=0.6]{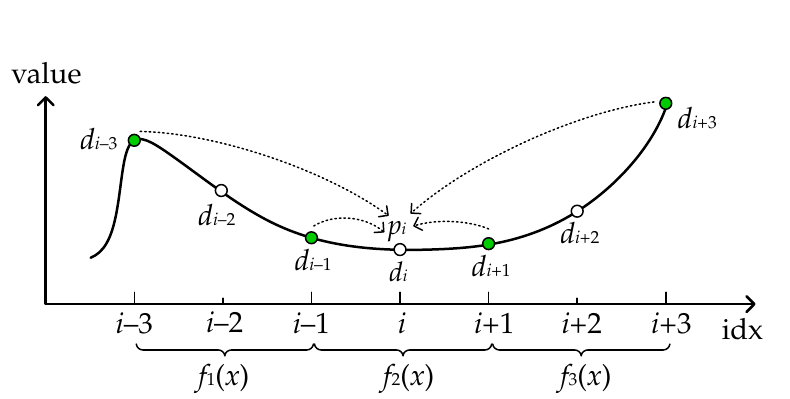}}
  \caption{Illustration of 1D cubic spline interpolation.}
  \label{fig:illu of 1d cubic spline}
  \vspace{-2mm}
\end{figure}
\begin{equation}
\label{eq: cubic function}
\begin{array}{l}
\hspace{-3mm}f_{1}(x)
=a_1(x\hspace{-1mm}-\hspace{-1mm}(i\hspace{-1mm}-\hspace{-1mm}3))^3
\hspace{-1mm}+\hspace{-1mm}b_1(x\hspace{-1mm}-\hspace{-1mm}(i\hspace{-1mm}-\hspace{-1mm}3))^2
\hspace{-1mm}+\hspace{-1mm}c_1(x\hspace{-1mm}-\hspace{-1mm}(i\hspace{-1mm}-\hspace{-1mm}3))
\hspace{-1mm}+\hspace{-1mm}\delta_1 \hspace{-1mm} \\

 \hspace{-3mm}f_{2}(x)
 =a_2(x\hspace{-1mm}-\hspace{-1mm}(i\hspace{-1mm}-\hspace{-1mm}1))^3
\hspace{-1mm} +\hspace{-1mm}b_2(x\hspace{-1mm}-\hspace{-1mm}(i\hspace{-1mm}-\hspace{-1mm}1))^2
 \hspace{-1mm}+\hspace{-1mm}c_2(x\hspace{-1mm}-\hspace{-1mm}(i\hspace{-1mm}-\hspace{-1mm}1))
 \hspace{-1mm}+\hspace{-1mm}\delta_2 \hspace{-1mm}\\
 
\hspace{-3mm}f_{3}(x)
=a_3(x\hspace{-1mm}-\hspace{-1mm}(i\hspace{-1mm}+\hspace{-1mm}1))^3
\hspace{-1mm}+\hspace{-1mm}b_3(x\hspace{-1mm}-\hspace{-1mm}(i\hspace{-1mm}+\hspace{-1mm}1))^2 
\hspace{-1mm}+\hspace{-1mm} c_3(x\hspace{-1mm}-\hspace{-1mm}(i\hspace{-1mm}+\hspace{-1mm}1))
\hspace{-1mm}+\hspace{-1mm}\delta_3 \hspace{-1mm}
\end{array}
\end{equation}

The spline functions $f_1$, $f_2$, and $f_3$ have scopes of [$i-$3,$i-$1], [$i-$1,$i+$1], and [$i+$1,$i+$3], respectively. The zero-order, first-order, and second-order interpolation conditions are shown as follows:

\vspace{-3mm}
\begin{equation}
\begin{array}{l}
\label{eq: spline system}
 f_{1}(i-3)= d_{i-3}; \     
 f_{1}(i-1)= d_{i-1} \\
 f_{2}(i-1)= d_{i-1}; \
 f_{2}(i+1)= d_{i+1} \\
 f_3(i+1)=d_{i+1}; \
 f_3(i+3)=d_{i+3}\\
f_{1}^{'}(i-1)=f_{2}^{'}(i-1); \
 f_{2}^{'}(i+1)=f_{3}^{'}(i+1)\\
 f_{1}^{''}(i-1)=f_{2}^{''}(i-1); \ 
 f_{2}^{''}(i+1)=f_{3}^{''}(i+1)\\
\end{array}
\end{equation}

Since $f_1$, $f_2$, and $f_3$ have 12 coefficients in total and Eq. \ref{eq: spline system} only has 10 conditions, two more boundary conditions are needed. The traditional SZ3 and QoZ cubic spline interpolation \cite{szinterp,qoz} applies the following 'not-a-knot' conditions:
\begin{equation}
\vspace{-1mm}
\label{eq:not-a-knot spline boundary}
\begin{array}{l}
 f_{1}^{'''}(i-1)=f_{2}^{'''}(i-1); \
 f_{2}^{'''}(i+1)=f_{3}^{'''}(i+1)\\
\end{array}
\end{equation}

Then with Eq. \ref{eq: spline system} and Eq. \ref{eq:not-a-knot spline boundary}, the prediction value of $p_i$ is:
\begin{equation}
\vspace{-1mm}
\label{eq: not-a-knot spline}
\begin{array}{l}
p_i=f_2(i)=-\frac{1}{16}d_{i-3} + \frac{9}{16}d_{i-1} + \frac{9}{16}d_{i+1} -\frac{1}{16}d_{i+3}
\end{array}
\end{equation}

However, there are other choices for the 2 boundary conditions, which may lead to different cubic spline interpolation formulas. We explore another set of boundary conditions: the natural spline condition, which is:
 \begin{equation}
 \vspace{-2mm}
 \label{eq:nature spline boundary}
 \begin{array}{l}
  f_{1}^{''}(i-3)=0; 
  f_{3}^{''}(i+3)=0\\
 \end{array}
 \vspace{1mm}
 \end{equation}

Combining Eq. \ref{eq: spline system} and Eq. \ref{eq:nature spline boundary}, the interpolation function for predicting $p_i$ would be written as:
 \begin{equation}
 \label{eq: natural spline}
 \begin{array}{l}
 p_i=f_2(i)=-\frac{3}{40}d_{i-3} + \frac{23}{40}d_{i-1} + \frac{23}{40}d_{i+1} -\frac{3}{40}d_{i+3}
 \end{array}
 \end{equation}

Our experiments with multiple datasets under diverse error thresholds showed that Eq. \ref{eq:lininterp}, Eq. \ref{eq: not-a-knot spline}, and Eq. \ref{eq: natural spline} have distinct advantages. In different cases, each of them is able to outperform others. Therefore, we employ all 3 of them and dynamically select from them for each task.
\subsection{1D and Multi-dimensional Spline Interpolation}
\label{sec:mdinterp}
In traditional interpolation-based compressors, for each data point, the interpolation is performed along a single dimension, so we need to switch the interpolation directions during this process and arrange an order for those directions. In the following text, we call the interpolation method adopted by SZ3/QoZ \textit{1D-style interpolation}. As an example, in Figure \ref{fig:1dvsmd} (a), the 1D-style interpolation first proceeds interpolations along Dim0, then performs the rest of the interpolations along Dim1. 

Actually, The existing 1D-style interpolation has not fully exploited the multi-dimensional continuity and smoothness of input data arrays, because all the interpolations are constricted in a single-dimensional direction. To address this limitation, we propose a new interpolation paradigm for HPEZ called multi-dimensional spline interpolation, which can take better advantage of data correlation across multiple dimensions. As shown in Figure \ref{fig:1dvsmd} (b), the multi-dimensional spline interpolation initially performs the 1D interpolations for some data points as there are only 1D neighbors at the moment, then it performs 2D interpolations for the remaining data points that already have neighbors in two dimensions. 
The multi-dimensional spline interpolation is symmetric across all the dimensions, meaning that it does not need a selection of dimensional order. 
\begin{figure}[ht]
  \centering
  \raisebox{-1mm}{\includegraphics[scale=0.6]{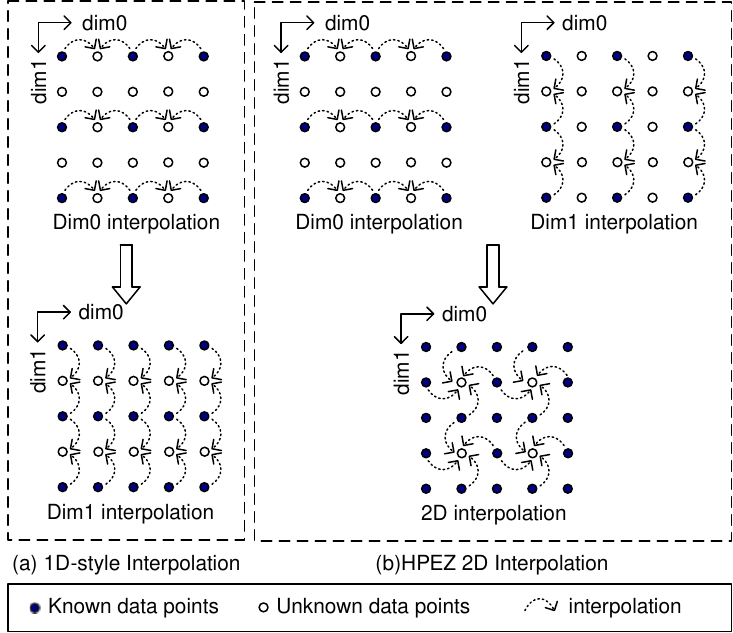}}
  \vspace{-2mm}
  \caption{Comparison of 1D-style interpolation and HPEZ multi-dimensional interpolation (an 2D example).}
  \label{fig:1dvsmd}
\end{figure}

With the main concept of the HPEZ multi-dimensional spline interpolation in mind, two questions remain: how should we carry out the multi-dimensional interpolations specifically, and why does it outperform the 1D-style interpolations?

We feature the HPEZ multi-dimensional interpolation as follows. For each data point $x$, suppose $X_i \ (1 \le i \le n)$ are all the available 1D interpolation results for predicting $x$ (which can either be linear interpolation or cubic interpolation and are along all dimensions), the multi-dimensional interpolation result $X^{'}$ is a linear-combination of $X_i$:
 \begin{equation}
 \vspace{-2mm}
 \label{eq: mdinterp}
 X^{'}=\sum_{i=1}^{n}\alpha_iX_i \ \ \ (\sum_{i=1}^{n}\alpha_i=1)
 \end{equation}
\begin{theorem}
\label{theo:md}
With fine-tuned $\alpha_i$, $X^{'}$ would have a no higher prediction error than that of the 1D-style interpolation $X_i$. 
\end{theorem}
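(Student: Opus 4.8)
The plan is to exploit the partition-of-unity constraint $\sum_{i=1}^n \alpha_i = 1$ in Eq.~\ref{eq: mdinterp} to rewrite the error of $X'$ as an affine combination of the errors of the individual 1D interpolations $X_i$, and then to observe that every ``use a single predictor'' choice is itself a (degenerate) feasible value of $\alpha$, so that optimizing over $\alpha$ can only help. First I would fix a target point with true value $x$ and set $e_i = X_i - x$ for the error of the $i$-th 1D interpolation. Since $\sum_i \alpha_i = 1$,
\[
X' - x \;=\; \sum_{i=1}^n \alpha_i X_i \;-\; \Big(\sum_{i=1}^n \alpha_i\Big) x \;=\; \sum_{i=1}^n \alpha_i (X_i - x) \;=\; \sum_{i=1}^n \alpha_i e_i ,
\]
so the error of the multi-dimensional prediction is exactly the same affine combination, with weights $\alpha_i$, of the 1D errors. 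For the auto-tuning (aggregate) setting I would replace the single point by the collection of target points $\{x^{(k)}\}$, write $e_i^{(k)} = X_i^{(k)} - x^{(k)}$, and take the quantity being tuned to be the least-squares error $J(\alpha) = \sum_k \big(\sum_i \alpha_i e_i^{(k)}\big)^2$, which ``fine-tuning $\alpha_i$'' minimizes subject to $\sum_i \alpha_i = 1$.

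The core step is then immediate: the feasible set $\{\alpha : \sum_i \alpha_i = 1\}$ (and even the simplex, if one insists on a convex combination) contains every standard basis vector $\mathbf{e}_j$, and taking $\alpha = \mathbf{e}_j$ forces $X' \equiv X_j$, so $J(\mathbf{e}_j)$ is nothing but the aggregate error of the $j$-th 1D predictor. Hence $\min_\alpha J(\alpha) \le \min_j J(\mathbf{e}_j)$; in particular the optimally tuned $X'$ is no less accurate than the 1D-style interpolation used by SZ3/QoZ (and no less accurate than any single $X_i$). The per-point form of the statement follows by the same argument applied with a single index $k$.

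Finally I would add a short remark sharpening ``no higher'' to ``strictly lower'' in the generic case: $J$ is a convex quadratic in $\alpha$, so its constrained minimizer has a closed form obtained from a single Lagrange multiplier against the Gram matrix $G_{ij} = \sum_k e_i^{(k)} e_j^{(k)}$ and the all-ones vector; unless the 1D error vectors are aligned in the degenerate way that already makes some $\mathbf{e}_j$ optimal, the inequality is strict, which is what makes the multi-dimensional scheme genuinely beneficial rather than merely harmless. The only real subtlety — and thus the part I would be most careful to pin down — is the precise meaning of ``prediction error'' and ``fine-tuned'': per-point versus aggregate, and whether the weights are optimized on the same data on which they are then evaluated (in practice they are tuned globally and reused, so the clean statement is the aggregate one). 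Once that is fixed, the conclusion is forced, since choosing any single 1D interpolation is just a special case of the linear-combination predictor.
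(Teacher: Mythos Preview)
Your argument is correct, and it takes a genuinely different route from the paper. The paper models the $X_i$ as \emph{independent} random variables with $E(X_i)=x$, so that the combined error has variance $\sigma^2=\sum_i \alpha_i^2\sigma_i^2$; it then minimizes this diagonal quadratic via a Lagrange multiplier to obtain the explicit weights $\alpha_i^* = \pi_i/\sum_j \pi_j$ with $\pi_i=\prod_j\sigma_j^2/\sigma_i^2$, and finally checks that the resulting minimum variance is $\le \min_i \sigma_i^2$. Your argument bypasses the stochastic modeling entirely: you simply observe that each basis vector $\mathbf e_j$ is feasible and recovers $X_j$, so the optimum over the full affine set cannot exceed any $J(\mathbf e_j)$. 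This is more elementary and more robust, since it needs neither unbiasedness nor independence (and in practice the 1D errors along different axes need not be independent). What the paper's route buys is the closed-form weight formula actually used in the auto-tuning (Section~\ref{sec:stats} estimates the $\sigma_i^2$ and plugs into Eq.~\ref{eq:alpha}); your Gram-matrix remark recovers this as the special case of a diagonal Gram matrix, but the paper never writes the general (correlated) solution. Your closing caveat about ``fine-tuned on the same data vs.\ reused'' is well taken and is precisely the gap between the clean inequality and the practical pipeline.
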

\begin{proof}
Without loss of generality, we can regard $\{X_i\}$ and $X^{'}$ as random variables, in which $\{X_i\}$ are independent with each other. When dealing with smooth data inputs, the $\{X_i\}$ can be thought of as no-biased estimations of $x$, i.e. $E(X_i)=x$.

Now consider the $X^{'}$. Since $\sum_{i=1}^{n}\alpha_i=1$, it is easy to know that $E(X^{'})=x$, so $X^{'}$ is still a non-biased estimation of $x$. Because ${X_i}$ are independent with each other, $(X^{'}-x)=\sum_{i=1}^{n}\alpha_i(X_i-x)$ follows the distribution of $ N(0,\sigma^2)$, in which:
\vspace{-1mm}
\begin{equation}
\label{eq:sigma}
\sigma^2=\sum_{i=1}^{n}\alpha_i^2\sigma_i^2
\end{equation}
With the Lagrange method, based on the constraint $\sum_{i=1}^{n}\alpha_i=1$, 
\begin{equation}
\label{eq:minG}
\begin{split}
\min \sigma^2 & =\frac{\prod_{i=1}^{n}\sigma_i^2}{\sum_{i=1}^{n}\pi_i}
 \le \min \{\sigma_1^2,\sigma_2^2,...\sigma_n^2\} \
 ( \pi_i=\frac{\prod_{j=1}^{n}\sigma_j^2}{\sigma_i^2} )
\end{split}
\end{equation}
, and the minimum is obtained when:
\begin{equation}
\label{eq:alpha}
\begin{split}
\alpha_i^*=\frac{\pi_i}{\sum_{j=1}^{n}\pi_j}
\end{split}
\end{equation}

As such, we have proved that, if the $\{\alpha_i\}$ is selected based on Eq. \ref{eq:alpha}, the prediction error variance of the multi-dimensional interpolation $X^{'}$ will be no larger than each of the 1D-style interpolation $X_i$ according to Eq. \ref{eq:minG}. So, the average L-1 prediction error will also be minimized.
\end{proof}

How to determine $\alpha_i^*$ in HPEZ (i.e. how to estimate $\sigma_i^{2}$) will be detailed in Section \ref{sec:autotuning}. 

\subsection{Interpolation Re-ordering}
\label{sec:reorder}
After the proposal of natural cubic spline and multi-dimensional interpolation, HPEZ also introduces interpolation re-ordering, which improves both prediction accuracy and prediction speed. It includes two aspects: the fast-varying-first interpolation and same-level cubic interpolation.
\subsubsection{Fast-varying-first interpolation}
\label{sec:fvfi}
In the existing implementation of 1D interpolations, the interpolations are executed axis by axis on the input dataset, and along each axis, the interpolations are performed 'slice by slice'. The 'slice' here means a slice of the data array along an interpolation axis. Figure \ref{fig:fvf-2d} (a) presents a 2D example for the order of interpolations adopted by QoZ (and also SZ3): the interpolations are performed in the sequence of numbers (\circled{1}, \circled{2}, \circled{3}, $\cdots$). For the interpolation along Dim0 in QoZ, it follows dim0-major order: the interpolation is executed along Dim0 with a higher preference compared with Dim1. However, when Dim1 is the fastest-varying-dimension 
, this interpolation order may fall into a bad cache usage because it is successively accessing data points located distantly in the memory. 
To resolve this issue, HPEZ re-arranges the interpolation order, having the interpolations first move along the fast-varying dimension (the Dim1-major style as in Figure \ref{fig:fvf-2d}), as demonstrated in Figure \ref{fig:fvf-2d} (b). The interpolation position first traverses through Dim1 and then moves along Dim0. In this way, the data points are accessed sequentially with shorter distances in the memory so that the cache usage can be optimized, greatly saving the memory access cost. 

\begin{figure}[ht]
  \centering
  \raisebox{-1mm}{\includegraphics[scale=0.5]{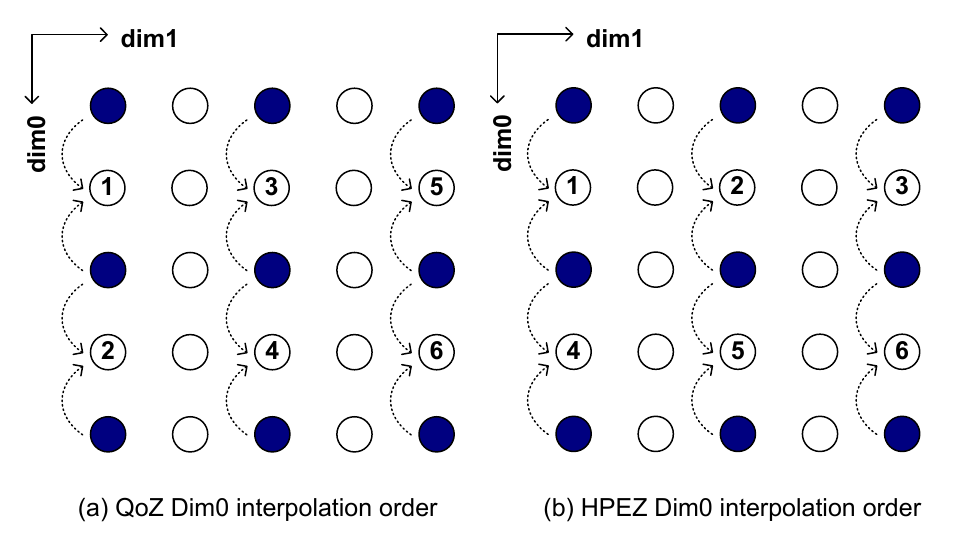}}
  \vspace{-2mm}
  \caption{Comparison of QoZ and HPEZ interpolation orders (Dim1 is the fastest-varying dimension)}
  \label{fig:fvf-2d}
\end{figure}
\begin{figure}[ht]
  \centering
  \raisebox{-1mm}{\includegraphics[scale=0.48]{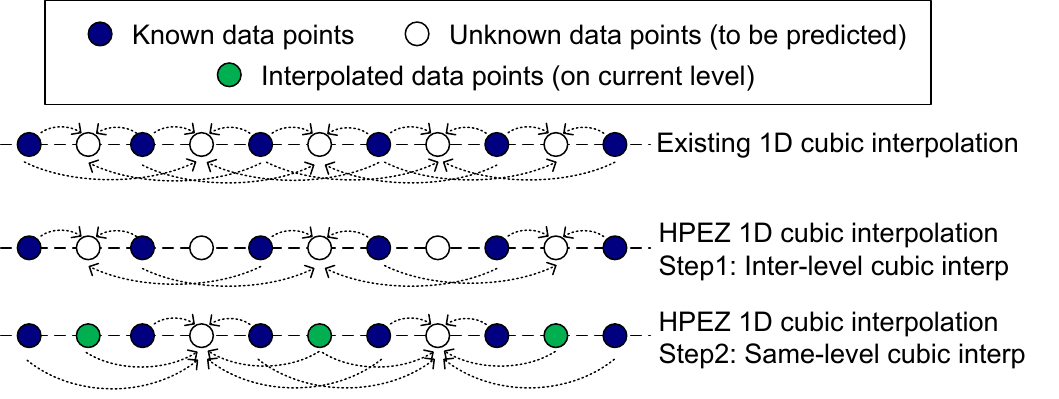}}
  \vspace{-1mm}
  \caption{Illustration of same-level cubic interpolation}
  \label{fig:iii}
\end{figure}
\subsubsection{Same-level cubic interpolation}

We develop a new same-level cubic interpolation in HPEZ, which can further improve prediction accuracy.
In the traditional interpolation design \cite{qoz,szinterp}, at each interpolation level, the neighbor points of each data point to be interpolated are limited on the higher levels (interpolation levels with larger strides). For the 1D cubic spline interpolation applied on a data point with stride $s$, 4 neighbor points with distance $s$ and $3s$ are used, which have been predicted on the higher interpolation levels. As shown in Figure \ref{fig:iii} (note that $s$ is the distance between each closest hollow and solid point), the first row shows this interpolation method, in which all the hollow data points (on the current interpolation level) are predicted by the solid data points (on higher interpolation levels). If we are able to include more neighbors for each point (for example, the 2 white points with a distance of $2s$ to it), the prediction accuracy can be improved. As illustrated in the 2nd and 3rd rows of Figure \ref{fig:iii}, instead of traversing through all the white data points in one step, HPEZ splits the 1D cubic spline interpolation into 2 steps. In the first step (the second row of Figure \ref{fig:iii}), half of the white points are interpolated by inter-level interpolation (the existing interpolation) with 4 neighbor points. 
In the second round, the rest half of the white points are interpolated by the same-level interpolation with 6 neighbor points for each, 
including points interpolated on higher interpolation levels and the current interpolation level. With this new interpolation, half of the data points are predicted with two more neighbor points to achieve better prediction accuracy. Similar to the deductions in Section \ref{sec:natspline}, for a data point $p_i$, with its 6 neighbor points $d_{i-3}$, $d_{i-2}$, $d_{i-1}$, $d_{i+1}$, $d_{i+2}$, and $d_{i+3}$ the same-level cubic spline interpolation formula would be the following two. Eq. \ref{eq: not-a-knot inlv spline} is for the not-a-knot cubic spline and Eq. \ref{eq: natural inlv spline} is for the natural cubic spline. The same strategy can also be extended to the multi-dimensional interpolation, splitting it into 2 steps each with halved data points. 
\begin{equation}
\vspace{-2mm}
\label{eq: not-a-knot inlv spline}
\begin{array}{l}
p_i=-\frac{1}{6}d_{i-2} + \frac{4}{6}d_{i-1} + \frac{4}{6}d_{i+1} -\frac{1}{6}d_{i+2}
\end{array}
\vspace{1.5mm}
\end{equation}
\begin{equation}
\label{eq: natural inlv spline}
\begin{array}{l}
\hspace{-1mm}p_i \hspace{-0.5mm}=\hspace{-0.5mm}\frac{3}{62}d_{i-3}\hspace{-0.2mm}-\hspace{-0.2mm}\frac{18}{62}d_{i-2} \hspace{-0.2mm}+\hspace{-0.2mm} \frac{46}{62}d_{i-1} \hspace{-0.2mm}+\hspace{-0.2mm} \frac{46}{62}d_{i+1} \hspace{-0.2mm}-\hspace{-0.2mm}\frac{18}{62}d_{i+2}\hspace{-0.2mm}+\hspace{-0.2mm}\frac{3}{62}d_{i+3}

\end{array}
\end{equation}

\section{HPEZ Auto-tuning Modules}
\label{sec:autotuning}
we developed an advanced auto-tuning module in HPEZ, which plays a critical role in preserving and optimizing the compression quality by making the best use of the abundant interpolation options offered by HPEZ which are discussed in Section \ref{sec:interp}. 
Figure \ref{fig:autotuning} displays all the components and processes of the HPEZ auto-tuning module. This module inherits the interpolation error-bound tuning process from QoZ \cite{qoz}, while substantially upgrading the QoZ 'global' interpolation tuning process. Specifically, HPEZ exploits several brand-new processes: dynamic dimension freezing tuning, block-wise interpolation tuning, Lorenzo tuning, and a data sampling/statistical analysis process supporting those tuning processes. In the remainder of this section, we present the detailed design of the auto-tuning-related components in HPEZ.
\begin{figure}[ht]
  \centering
  \vspace{1mm}
  \raisebox{-1mm}{\includegraphics[scale=0.56]{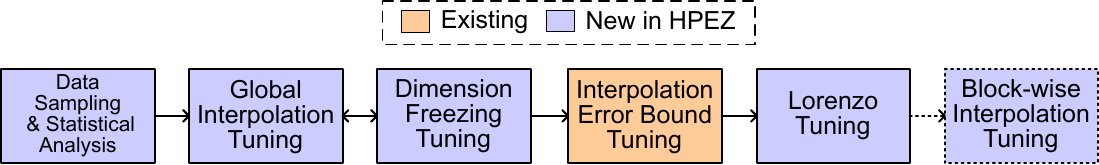}}
  \vspace{-1mm}
  \caption{HPEZ auto-tuning module}
  \label{fig:autotuning}
\end{figure}
\vspace{-2mm}
\subsection{Data Sampling and Statistical Analysis}
\label{sec:stats}
The data sampling and statistical analysis is an auxiliary process of the HPEZ auto-tuning module. In this process, HPEZ uniformly samples a small portion from the full data input (based on a hyper-parameter with the default sampling rate of 0.2\%), and then it performs the 1D interpolation (both linear and cubic) on those data points with their neighbors along all dimensions. Afterward, the mean square errors (MSE) of the interpolations along different dimensions can serve as the estimations of the interpolation error variances ($\sigma_i^{2}$) described in Section \ref{sec:mdinterp}. Thus, it can be used to determine the most non-smooth dimension in the data for dynamic dimension freezing (Section \ref{sec:freeze}) by selecting the dimension with the largest interpolation MSE.

\subsection{Global Interpolation Tuning}
The global interpolation tuning process in HPEZ is derived from the predictor tuning process proposed in QoZ, which aims to select the best-fit interpolation configuration from different choices 
Specifically, at each interpolation level, the global interpolation tuning process makes the following selection for the input data:

\begin{itemize}
    \item \textbf{Existing in QoZ}: The order of interpolation (linear or cubic);
    \item \textbf{Existing in QoZ}: The dimensional order (only for 1D-style interpolation);
    \item \textbf{New in HPEZ}: The type of cubic spline (not-a-knot or natural, only for cubic interpolation);
    \item \textbf{New in HPEZ}: The interpolation paradigm (1D-style or multi-dimensional);
    
    \item \textbf{New in HPEZ}: Applying inner-level interpolation or not (only for cubic interpolation);
\end{itemize}

Similar to QoZ, the sampled data are used for performing compression tests with all the available interpolation configurations. Then, HPEZ selects the interpolation configuration with the lowest average absolute prediction error as the final tuning result.

\subsection{Dynamic Dimension Freezing}
\label{sec:freeze}
The dynamic dimension freezing in HPEZ is designed to avoid inaccurate interpolation predictions along non-smooth dimensions. 
For a multi-dimensional input data array, it may present fine smoothness along some of its dimensions but present bad smoothness along the other dimensions. 
In those cases, both the 1D-style and multi-dimensional interpolation will fail in achieving high prediction accuracy as they will involve interpolations along non-smooth directions.
The dimension freezing is that, given one dimension, HPEZ sets anchor points along those dimensions with stride 1 (without intervals) and never performs interpolations along those dimensions. Figure \ref{fig:freeze} uses the interpolation on a 3D data block as an example of dimension freezing. For a clear view, only the 1D interpolations are shown. Figure \ref{fig:freeze} (a) is the normal 1D interpolations without a frozen dimension, and Figure \ref{fig:freeze} (b) is the 1D interpolations with a dimension frozen, in which no interpolations are made along the frozen dimension. With this dynamic strategy, HPEZ does not require data smoothness along all dimensions to optimize its compression ratio.
According to our experimental results, compared to the highly improved prediction accuracy and greatly reduced quantization bin size, the storage overhead for additional anchor points is affordable. To determine whether to freeze a dimension and which dimension should be frozen, the auto-tuning module of HPEZ first specifies the most non-smooth dimension in the input data array in the statistical analysis (Section \ref{sec:stats}), then separately tunes 2 optimized interpolation configurations with/without this dimension frozen. If freezing this dimension presents a better compression ratio, HPEZ will freeze this dimension.

\begin{figure}[ht]
  \centering
  \raisebox{-1mm}{\includegraphics[scale=0.5]{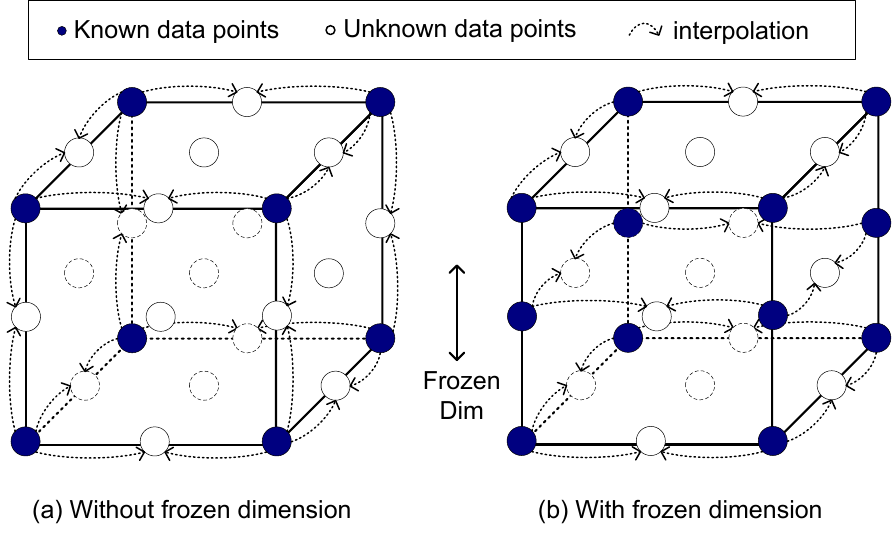}}
  \vspace{-2mm}
  \caption{Illustration of dimension freeze}
  \label{fig:freeze}
\end{figure}
\vspace{-2mm}

\subsection{Interpolation Error Bound Tuning}
\label{sec:ebtuning}
Previously indicated in Section \ref{sec:interpoverview}, the HPEZ interpolations on each interpolation level follow a separate dynamically auto-tuned error bound. 
For the level-wise error-bound setting, HPEZ follows the same design as in QoZ \cite{qoz}. The error bound for each level is computed by Eq. \ref{eq:ebl}, in which $\alpha$ and $\beta$ are tunable parameters:
\vspace{-2mm}
\begin{equation}
\label{eq:ebl}
e_l=\frac{e}{min(\alpha^{l-1},\beta)} \ (\alpha \geq 1 \ and \ \beta \geq 1)
\vspace{-2mm}
\end{equation}

In the auto-tuning process for determining $\alpha$ and $\beta$, HPEZ also leverages the module proposed in QoZ. We refer the readers to check \cite{qoz} for details.
\vspace{-2mm}
\subsection{Tuning with Lorenzo Predictor}
\label{sec:lorenzo}
Leveraged in SZ3 but excluded by QoZ, the dynamic-order Lorenzo predictor designed in \cite{sz-auto} is involved in HPEZ, as it is still an essential supplement of interpolation-based predictors for high-accuracy low-compression-ratio cases \cite{szinterp,sz3, liu2023faz}. In the auto-tuning compression test process, after the auto-tuning module has acquired the optimized interpolation-based rate-distortion pair and its corresponding configuration, the auto-tuning module runs one more compression test with the Lorenzo predictor, then makes the selection between the interpolation-based predictor and the Lorenzo predictor according to the pre-given optimization target. Following the design in \cite{liu2023faz}, a multiplicative coefficient is applied to adjust the bit rate estimation of the Lorenzo predictor.

\vspace{-1mm}
\subsection{Block-wise Interpolation Tuning}
\label{sec:blockwise}
If the interpolation predictor is finally selected after the Lorenzo tuning, the block-wise interpolation tuning will fine-tune the interpolation configuration separately on each data block. 
Various regions of the input data will exhibit different characteristics (such as dimension-wise smoothness), which makes them adapt to different interpolation configurations accordingly.
To address this issue, HPEZ introduces the block-wise interpolation tuning process into its auto-tuning module, dedicated to identifying the best-fit interpolation configurations for diverse segments of the data. Figure \ref{fig:blockwise} shows the details of the HPEZ block-wise interpolation tuning. First, after the auto-tuning has globally determined the level-wise interpolation error bounds (Figure \ref{fig:blockwise} (a)), the input data array is split into blocks (Figure \ref{fig:blockwise} (b)) of the same size. On each data block, a sub-block (in default has 4\% of the full block size) is sampled out in the center of this block (Figure \ref{fig:blockwise} (c)), and then the interpolation configuration for this block (Figure \ref{fig:blockwise} (d)) is tuned by the compression tests performed on the sampled sub-block. The block size for block-wise interpolation tuning is a hyper-parameter in HPEZ, and after primary experiments, we use the default value of 32 for it.
\begin{figure}[ht]
  \centering
  \vspace{2mm}
  \raisebox{-1mm}{\includegraphics[scale=0.35]{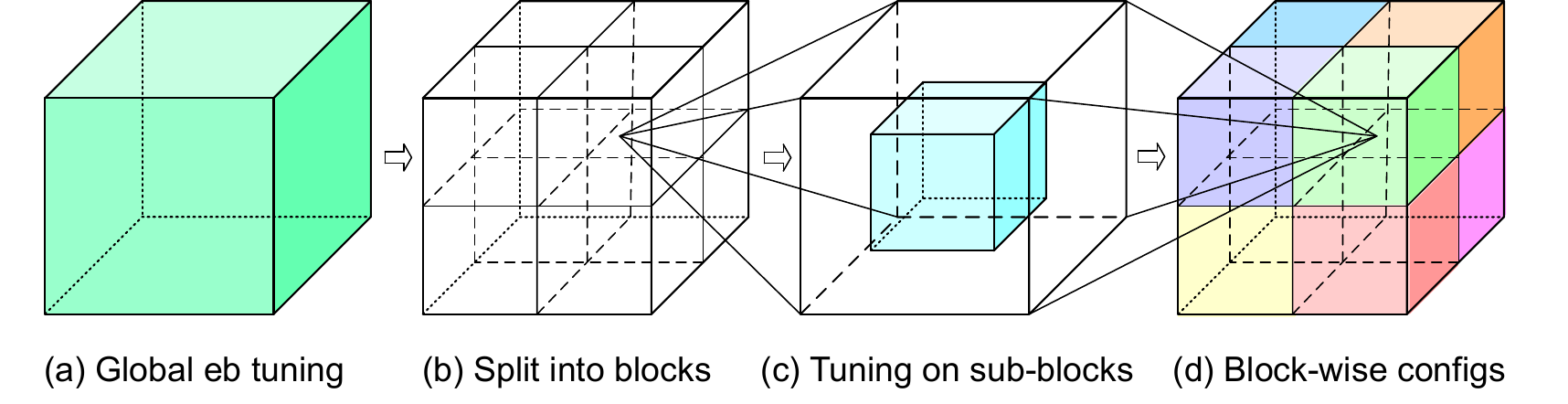}}
  \vspace{-2mm}
  \caption{Block-wise interpolation tuning}
  \label{fig:blockwise}
\end{figure}

\section{Performance Evaluation}
\label{sec:evaluation}

To verify the effectiveness and efficiency of HPEZ, systematical evaluations of HPEZ together with six other state-of-the-art error-bounded lossy compressors are presented in this section. 

\subsection{Experimental Setup}
\label{sec:setup}

\subsubsection{Experimental environment and datasets} 
\label{sec:env}
We conducted all the evaluation experiments on the Purdue Anvil supercomputer (for all experiments) and the Argonne Bebop supercomputer (for the Globus-based data transfer test). On the Anvil supercomputer, each computing node features two AMD EPYC 7763 CPUs with 64 cores having a 2.45GHz clock rate and 256 GB DDR4-3200 RAM. The computing node we used on the Bebop has the Intel Xeon E5-2695v4 CPU with 64 CPU cores and a total of 128GB of DRAM.

In order to evaluate the compressors more comprehensively and systematically, 8 real-world scientific applications from diverse scientific domains that have been frequently used for the evaluation of scientific data error-bounded lossy compression \cite{sdrb} are involved in the evaluation.
The detailed information of the datasets is in Table~\ref{tab:dataset information}. As suggested by domain scientists, some fields of the datasets listed above are transformed to their logarithmic domain before compression for better visualization. Among those 8 datasets, 6 are in the floating point type and 2 are in the integer type. Because floating point data are the very majority of scientific data and several of the existing scientific compressors only support floating point data, in the following experiments we mainly focus on the 6 floating point datasets and present the evaluations on the integer datasets as verification of HPEZ for its adaptiveness to scientific integer datasets and other integer datasets (natural images and videos).
\begin{table}[ht]
    \centering
    \caption{Information of the datasets in experiments}
\resizebox{0.75\columnwidth}{!}{  
    \begin{tabular}{|c|c|c|c|c|c|}
    \hline
    App.&\# files& Dimensions & Total Size& Domain & Type\\
    \hline
    RTM \cite{geodriveFirstBreak2020} &37&449$\times$449$\times$235&6.5GB&Seismic Wave&Floating points\\
    \hline
    SEGSalt \cite{SEGSalt}&3&1008$\times$1008$\times$352&4.2GB&Geology&Floating points\\
    \hline
    Miranda \cite{miranda}& 7 & 256$\times$384$\times$384& 1GB& Turbulence&Floating points \\
    \hline
    SCALE-LetKF \cite{scale-letkf} & 12 & 98$\times$1200$\times$1200 & 6.4GB&Climate&Floating points\\
    \hline
    CESM-ATM \cite{cesm}& 33 & 26$\times$1800$\times$3600 &17GB& Weather&Floating points\\
    \hline
    JHTDB \cite{jhtdb}& 10 & 512$\times$512$\times$512 &5GB& Turbulence&Floating points\\
    \hline
    NSTX-GPI \cite{nstx-gpi}& 1 & 50000$\times$80$\times$64 &977MB& Fusion&Integer\\
    \hline
    APS & 5 & 1792$\times$2048 &71MB& Material&Integer\\
    \hline
    \end{tabular}}
    \label{tab:dataset information}
\end{table}
\vspace{-3mm}
\subsubsection{Comparison of lossy compressors in evaluation}
In our experiments, we compare HPEZ with six other error-bounded lossy compressors, which have been verified to have good compression quality and/or performance in prior works \cite{liu2023faz,sz3,szinterp,qoz}. The six compressors can be categorized into \textbf{high-performance compressors} and \textbf{high-ratio compressors}. The high-performance compressors have relatively fast compression speeds with moderate compression ratios, including SZ3.1 \cite{sz3}, ZFP 0.5.5 \cite{zfp}, and QoZ 1.1 \cite{qoz}. The high-ratio compressors achieve a high compression ratio/quality with advanced data processing methods, therefore having relatively low compression speeds. They are SPERR 0.6 \cite{SPERR}, FAZ \cite{liu2023faz}, and TTHRESH \cite{ballester2019tthresh}.
HPEZ should be categorized as a high-performance compressor because it exhibits comparable compression speed with modern high-performance compressors.

We didn't involve deep-learning-based compressors due to the following reasons: 1) Coordinate-network-based compressors suffer from extremely low compression speeds which are far from acceptable. 2) Autoencoder-based compressors also have low compression speeds (not comparable with high-performance compressors. For example, AE-SZ has similar speeds with SPERR \cite{ae-sz}). Meanwhile, their compression ratios are lower than SZ3 as validated in \cite{ae-sz}. 

\vspace{-1mm}

\subsubsection{Experimental configurations and evaluation metrics}
In the compression experiments, the error bound mode we adopted is value-range-based error bound (denoted as $\epsilon$) \cite{z-checker}, which is essentially equivalent to the absolute error bound (denoted as $e$), with the relationship of $e$ = $\epsilon \cdot value\_range$. Since the value-range-based error bound can adapt to diverse amplitudes of datasets, 
it has been broadly used in the lossy compression community \cite{Xin-bigdata18,sz3,liang2021mgard+,sz-auto,liu2023faz}.


We perform the evaluation based on the following key metrics: 

\begin{itemize}
    \item {Speeds: Check the compression and decompression speeds of compressors.}
    \item {Compression ratio (CR) under the same error bound: Compression ratio is the metric mostly cared for by the users. Given the input data $X$ and compressed data $Z$, the compression ratio $CR$ is: $CR=\frac{|X|}{|Z|}$ ( $|\ |$ is the size operator).}
    \item \textit{Rate-PSNR plots}: Plot curves for compressors with the bit rate of the compressed data and the decompression data PSNR. 
    \item \textit{Rate-SSIM plots}: Another rate distortion evaluation  plotting bit rate and SSIM \cite{ssim}. 
     \item {Parallel throughput performance with compressors: Simulate and perform parallel data transfer tests on the distributed scientific database on multiple supercomputers. }
    \item {Visualization with the same CR: Comparing the visual qualities of the reconstructed data from different compressors based on the same CR.}
   
\end{itemize}
\vspace{-4mm}
\subsection{Experimental Results}

\subsubsection{Speeds}
To verify our categorization of compressors and examine the compression efficiency of HPEZ, in Table \ref{tab:eva-seq-speed} we present the compression and decompression speeds of 6 comparison compressors and HPEZ (under error bound 1e$-$3, i.e., $10^{-3}$) on the Anvil machine. From the table, we can clearly observe that the high-performance compressors (SZ 3.1, ZFP 0.5.5, and QoZ 1.1) have far better compression speeds than the high-ratio compressors (SPERR, FAZ, and TTHRESH) with the gap of 3$\times$-10$\times$. Having a speed of around 70\% $\sim$ 90\% of QoZ 1.1, HPEZ can definitely be regarded as a high-performance compressor, achieving 2$\times$ $\sim$ 6$\times$ performance improvement over SPERR/FAZ, and 4$\times$ $\sim$ 17$\times$ performance improvement over TTHRESH. This relatively high speed ensures the advantages of HPEZ on efficiency-oriented and high-ratio-preferred compression tasks. Figure \ref{fig:evaluation-eb-speed} presents the error bound-compression speed curves of each compressor on the 6 tested datasets ( the x-axis is the negative log10 of the error bounded and the y-axis is the compression speed). Those plots also prove that HPEZ is much more efficient than the high-ratio compressors (SPERR, FAZ, and TTHRESH) and has close performances to SZ3 and QoZ.

\begin{table}[ht]
\centering
\footnotesize
  \caption {Execution speeds (MB/s per CPU core) with $\epsilon$=1e-3} 
  \vspace{-2mm}
  \label{tab:eva-seq-speed} 
  \begin{adjustbox}{width=0.75\columnwidth}
\begin{tabular}{|c|c|c|c|c|c|c|c|c|}
\hline
\multirow{2}{*}{Type}          & \multirow{2}{*}{Dataset} & \multirow{2}{*}{SZ 3.1}  & \multirow{2}{*}{ZFP 0.5.5}   & \multirow{2}{*}{QoZ 1.1} & \multirow{2}{*}{SPERR 0.6} & \multirow{2}{*}{FAZ} & \multirow{2}{*}{TTHRESH}  & \multirow{2}{*}{HPEZ} \\
                               &                          &  &  & &                   &                      &  &  \\ \hline
\multirow{6}{*}{\rotatebox[origin=c]{90}{Compression}}   & CESM                     & 219 & 331   & 215 & 49                     & 58                   & 10    & 140 \\ \cline{2-9} 
                               & RTM                      & 211 & 412   & 191 & 63                     & 30                   & 18   & 142 \\ \cline{2-9} 
                               & Miranda                  & 163 & 416   & 157 & 35                     & 29                   & 28   & 140 \\ \cline{2-9} 
                               & SCALE                    & 188 & 191   & 182 & 32                     & 61                   & 17   & 129 \\ \cline{2-9} 
                               & JHTDB                    & 140 & 225   & 122 & 33                     & 28                   & 23   & 105  \\ \cline{2-9} 
                               & SegSalt                  & 189 & 645   & 201 & 51                     & 36                   & 13   & 141 \\ \hline
\multirow{6}{*}{\rotatebox[origin=c]{90}{Decompression}} & CESM                     & 661 & 584   & 689 & 92                     & 101                  & 53   & 513 \\ \cline{2-9} 
                               & RTM                      & 786 & 622   & 626 & 124                    & 64                   & 108  & 510 \\ \cline{2-9} 
                               & Miranda                  & 419 & 946   & 351 & 75                     & 60                   & 111  & 473 \\ \cline{2-9} 
                               & SCALE                    & 610 & 553   & 567 & 68                     & 140                  & 53   & 450 \\ \cline{2-9} 
                               & JHTDB                    & 376 & 425   & 243 & 70                     & 59                   & 60   & 330 \\ \cline{2-9} 
                               & SegSalt                  & 592 & 1060  & 629 & 108                    & 65                   & 97   & 485 \\ \hline
\end{tabular}
\end{adjustbox}
\vspace{3mm}
\end{table}

\begin{figure}[ht] 
\centering
\hspace{-10mm}
\subfigure[{RTM}]
{
\raisebox{-1cm}{\includegraphics[scale=0.25]{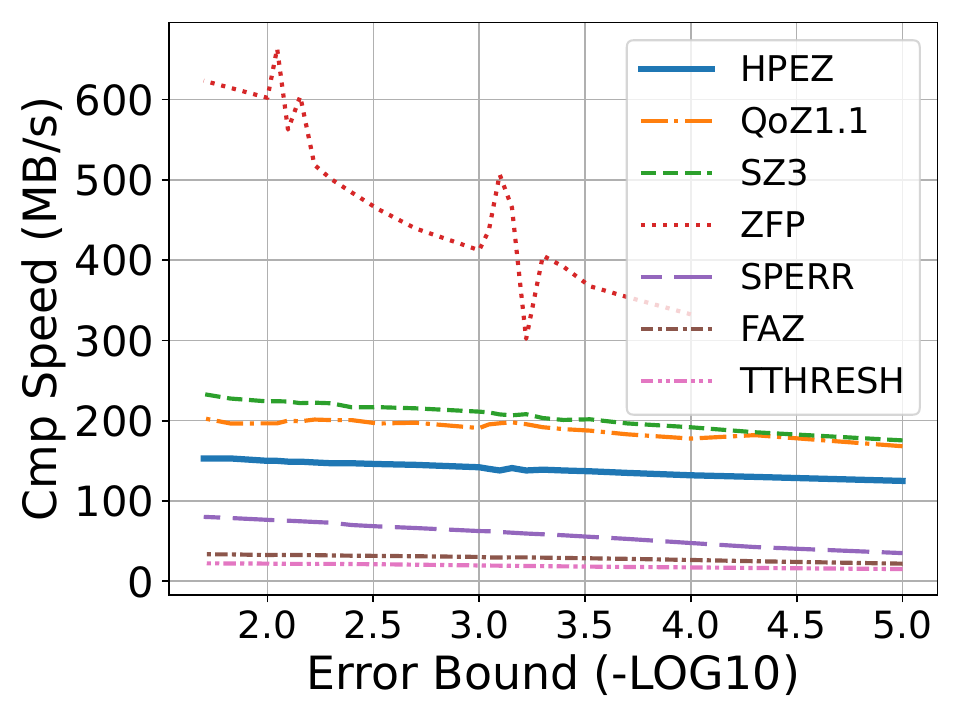}}
}%
\hspace{-2mm}
\subfigure[{CESM-ATM}]
{
\raisebox{-1cm}{\includegraphics[scale=0.25]{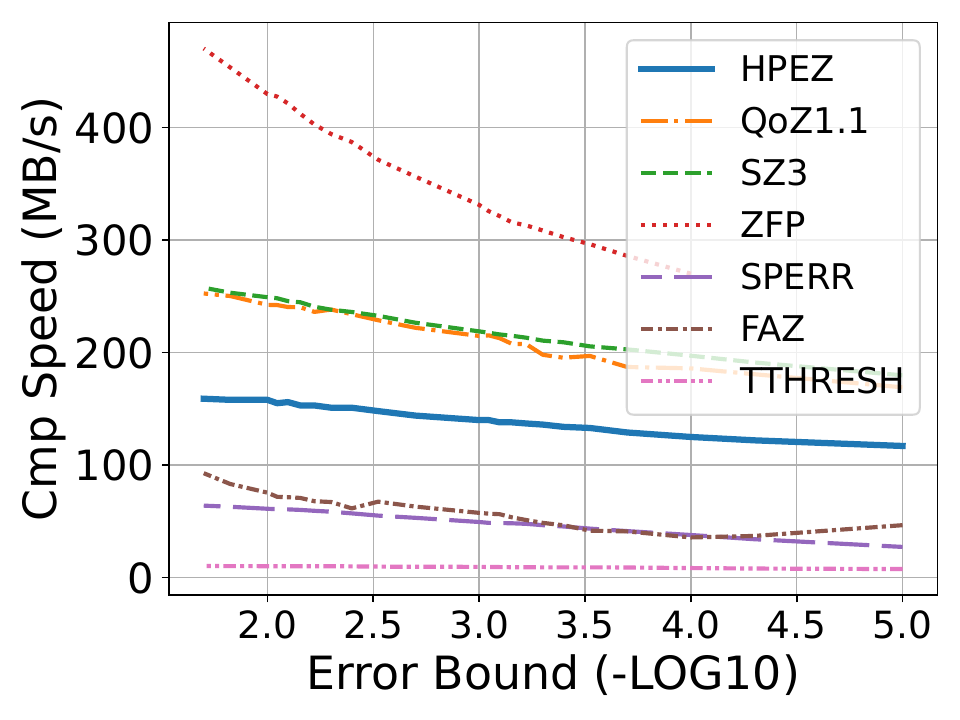}}
}%
\hspace{-2mm}
\subfigure[{JHTDB}]
{
\raisebox{-1cm}{\includegraphics[scale=0.25]{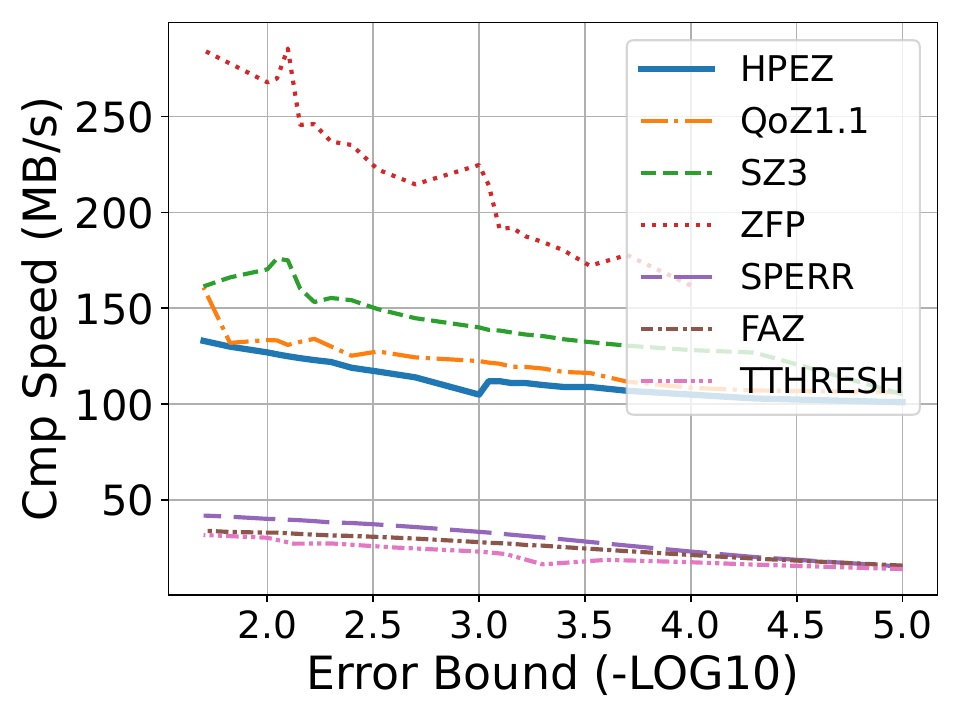}}
}%
\hspace{-10mm}
\vspace{-4mm}

\hspace{-10mm}
\subfigure[{Miranda}]
{
\raisebox{-1cm}{\includegraphics[scale=0.25]{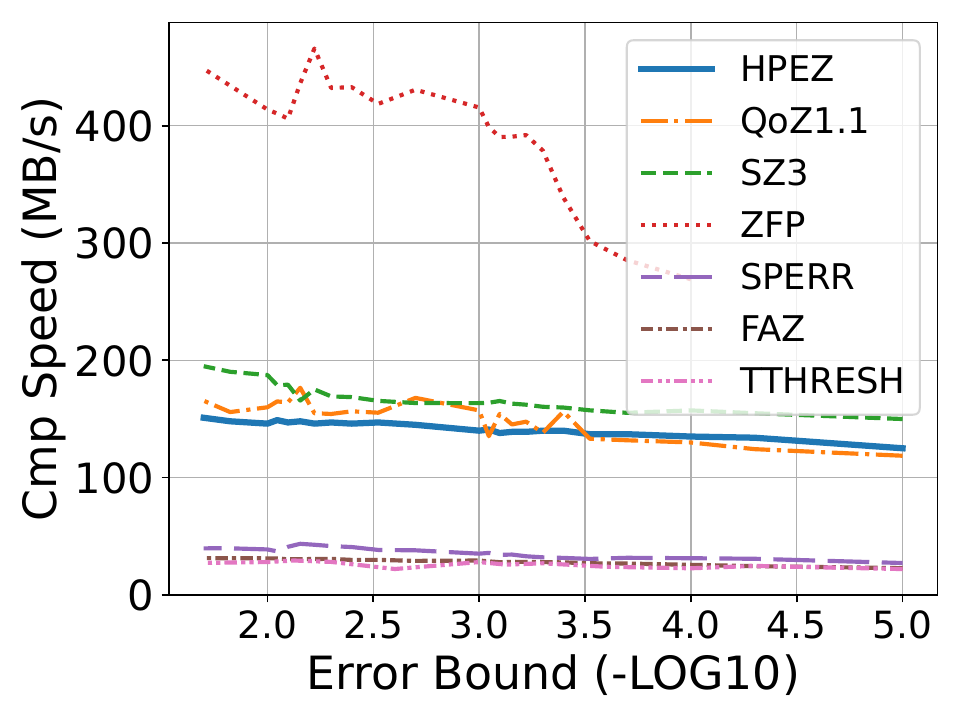}}%
}%
\hspace{-2mm}
\subfigure[{SCALE-LetKF}]
{
\raisebox{-1cm}{\includegraphics[scale=0.25]{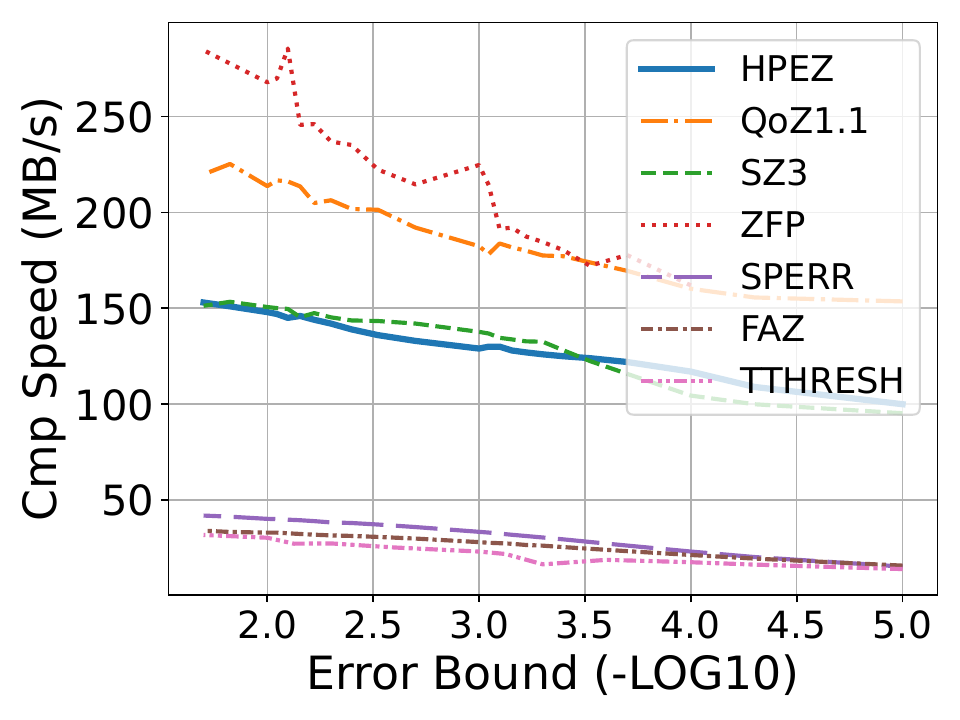}}%
}%
\hspace{-2mm}
\subfigure[{SegSalt}]
{
\raisebox{-1cm}{\includegraphics[scale=0.25]{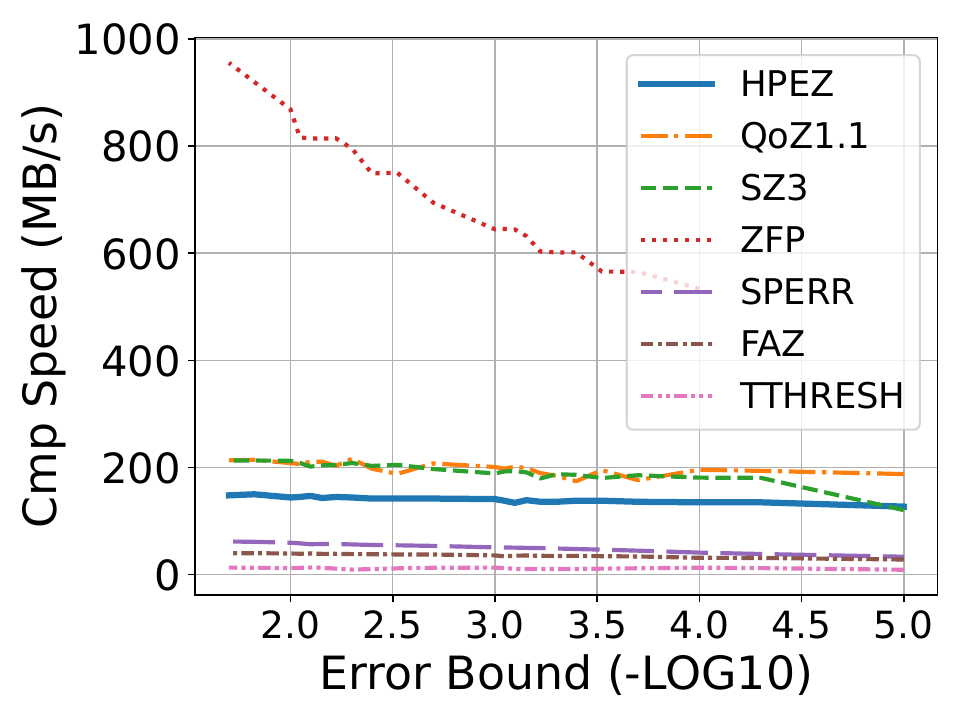}}%
}%
\hspace{-10mm}
\vspace{-2mm}
\caption{Error bound-compression speed plots.}
\label{fig:evaluation-eb-speed}
\end{figure}

\subsubsection{Compression ratios with the same error bounds}
Compressing the datasets with the selected compressors under the same error bounds, we list all the compression ratios in Table \ref{tab:hpcr} and \ref{tab:hrcr}. Table \ref{tab:hpcr} is a comparison among the high-performance compressors (in which the compression ratio optimization targets are set for QoZ, FAZ, and HPEZ). HPEZ achieves the best compression ratio in all cases. On the SegSalt dataset, HPEZ has a $40\%$ $\sim$ $75\%$ compression ratio improvement over the second-best compressor. On the RTM, Miranda, and JHTDB datasets, HPEZ achieves 20\%-45\% compression ratio improvements over the second-best. On the CESM-ATM dataset, under the error bound of 1e-3, HPEZ has a compression ratio of about 2.36$\times$ as high as the second-best (SZ3.1). With these considerable improvements, we can assert that HPEZ is the best choice among high-performance compressors regarding optimizing the error-bound-fixed compression ratio.

We also compare the compression ratios of HPEZ with the ones from the high-ratio compressors in Table \ref{tab:hrcr}. It shows that HPEZ can obtain even higher compression ratios than them in certain cases (e.g. on SCALE-LetKF and JHTDB). Note that the speed of HPEZ is substantially faster than the high-ratio compressors, making it quite competitive over them in speed-concerned use cases.     

\begin{table}[t]
\centering
\footnotesize
  \caption {Compression Ratios of High-Performance Compressors (SZ, ZFP, QoZ and HPEZ)} 
  \vspace{-2mm}
  \label{tab:hpcr} 
  \begin{adjustbox}{width=0.7\columnwidth}
  \begin{tabular}{|c|c|c|c|c|c|c|}
\hline
\multirow{2}{*}{\textbf{Dataset}}  & \multirow{2}{*}{\textbf{$\epsilon$}} & \multirow{2}{*}{\textbf{SZ 3.1}}  & \multirow{2}{*}{\textbf{ZFP 0.5.5}}   & \multirow{2}{*}{\textbf{QoZ 1.1}} & \multirow{2}{*}{\textbf{HPEZ}}  & \multirow{2}{*}{\textbf{Improve (\%)}} \\
                                   &                                   &  &  & & &     \\ \hline
\multirow{3}{*}{\textbf{RTM}}      & 1E-2                              & 1764         & 62.9            & 2156         & \textbf{2701} & 25.3             \\ \cline{2-7} 
                                   & 1E-3                              & 249          & 26.2              & 285          & \textbf{395}  & 38.6             \\ \cline{2-7} 
                                   & 1E-4                              & 55.3         & 14.3            & 58           & \textbf{71.1} & 22.6             \\ \hline
\multirow{3}{*}{\textbf{Miranda}}  & 1E-2                              & 574.6        & 46.6           & 977          & \textbf{1320} & 35.1             \\ \cline{2-7} 
                                   & 1E-3                              & 168          & 25.6           & 181          & \textbf{258}  & 42.5             \\ \cline{2-7} 
                                   & 1E-4                              & 47.3         & 14.5           & 47.7         & \textbf{63.6} & 33.3             \\ \hline
\multirow{3}{*}{\textbf{SegSalt}}  & 1E-2                              & 856          & 59.1           & 1005         & \textbf{1484} & 47.7             \\ \cline{2-7} 
                                   & 1E-3                              & 140.6        & 24.9           & 151          & \textbf{260}  & 72.2             \\ \cline{2-7} 
                                   & 1E-4                              & 38.2         & 14.9           & 35.9         & \textbf{61.7} & 61.5             \\ \hline
\multirow{3}{*}{\textbf{SCALE}}    & 1E-2                              & 167.3        & 14.5           & 160          & \textbf{186}  & 11.2             \\ \cline{2-7} 
                                   & 1E-3                              & 40.4         & 7.8            & 41.5         & \textbf{52.9} & 27.5             \\ \cline{2-7} 
                                   & 1E-4                              & 14.1         & 4.6            & 13.4         & \textbf{15.4} & 9.2              \\ \hline
\multirow{3}{*}{\textbf{JHTDB}}    & 1E-2                              & 528.2        & 22.3           & 647          & \textbf{838}  & 29.5             \\ \cline{2-7} 
                                   & 1E-3                              & 73.2         & 9.8            & 77.8         & \textbf{101}  & 29.8             \\ \cline{2-7} 
                                   & 1E-4                              & 15.8         & 5              & 15.9         & \textbf{20.6} & 29.6             \\ \hline
\multirow{3}{*}{\textbf{CESM-ATM}} & 1E-2                              & 373          & 18.2            & 263          & \textbf{675}  & 81.0             \\ \cline{2-7} 
                                   & 1E-3                              & 64.9         & 9.6              & 59.4         & \textbf{153}  & 135.7            \\ \cline{2-7} 
                                   & 1E-4                              & 22.9         & 5.8            & 21.7         & \textbf{38.9} & 69.9             \\ \hline
\end{tabular}
\end{adjustbox}
\end{table}

\begin{table}[t]
\centering
\footnotesize
  \caption {Compression Ratios of HPEZ and high-ratio compressors (SPERR, FAZ, and TTHRESH)} 
  \vspace{-2mm}
  \label{tab:hrcr} 
  \begin{adjustbox}{width=0.7\columnwidth}
  \begin{tabular}{|c|c|c|c|c|c|}
\hline
\multirow{2}{*}{\textbf{Dataset}}  & \multirow{2}{*}{\textbf{$\epsilon$}} & \multirow{2}{*}{\textbf{SPERR 0.6}} & \multirow{2}{*}{\textbf{FAZ}} & \multirow{2}{*}{\textbf{TTHRESH}} & \multirow{2}{*}{\textbf{HPEZ}} \\
                                   &                                   &                                &                               &                                   &   \\ \hline
\multirow{3}{*}{\textbf{RTM}}      & 1E-2                              & 2187                            & 2695                          & 782                               & \textbf{2701} \\ \cline{2-6} 
                                   & 1E-3                              & 440                             & \textbf{642}                  & 71.4                              & 395           \\ \cline{2-6} 
                                   & 1E-4                              & 84.1                            & \textbf{119}                  & 23.7                              & 71.1          \\ \hline
\multirow{3}{*}{\textbf{Miranda}}  & 1E-2                              & 971.4                           & 996.5                         & 447                               & \textbf{1320} \\ \cline{2-6} 
                                   & 1E-3                              & 243.9                           & \textbf{263.5}                & 142                               & 258           \\ \cline{2-6} 
                                   & 1E-4                              & 74.5                            & \textbf{93.6}                 & 55.1                              & 63.6          \\ \hline
\multirow{3}{*}{\textbf{SegSalt}}  & 1E-2                              & 1219.4                          & \textbf{1639.6}               & 291                               & 1484          \\ \cline{2-6} 
                                   & 1E-3                              & 228.9                           & \textbf{388.9}                & 99.5                              & 260           \\ \cline{2-6} 
                                   & 1E-4                              & 61.3                            & \textbf{117.3}                & 28.8                              & 61.7          \\ \hline
\multirow{3}{*}{\textbf{SCALE}}    & 1E-2                              & 103.5                           & 177.9                         & 80.0                              & \textbf{186}  \\ \cline{2-6} 
                                   & 1E-3                              & 35.5                            & 51.8                          & 18.9                              & \textbf{52.9} \\ \cline{2-6} 
                                   & 1E-4                              & 15                              & \textbf{16.8}                 & 8.4                               & 15.4          \\ \hline
\multirow{3}{*}{\textbf{JHTDB}}    & 1E-2                              & 639.8                           & 726                           & 373                               & \textbf{838}  \\ \cline{2-6} 
                                   & 1E-3                              & 89.3                            & 90.7                          & 65.1                              & \textbf{101}  \\ \cline{2-6} 
                                   & 1E-4                              & 19.9                            & 20.2                          & 17.1                              & \textbf{20.6} \\ \hline
\multirow{3}{*}{\textbf{CESM-ATM}} & 1E-2                              & \textbf{1221}                   & 292                           & 83.5& 675           \\ \cline{2-6} 
                                   & 1E-3                              & 150                             & 77.4                          & 20.4                              & \textbf{153}  \\ \cline{2-6} 
                                   & 1E-4                              & 35                              & 26.3                          & 8.7                               & \textbf{38.9} \\ \hline
\end{tabular}
\end{adjustbox}
\end{table}

\subsubsection{Compression rate-distortion}
\label{sec:rd}
In this section, we mainly present the evaluations of the compression rate-distortion with HPEZ and other high-performance compressors. The high-ratio compressors are capable of achieving excellent compression rate-distortion by spending much more time cost than high-performance compressors, therefore the comparison of rate-distortion would be fair if and only if we exclude the high-ratio compressors, making it within the scope of high-performance compressors to clearly examine how HPEZ has improved the compression quality meanwhile maximally preserving the compression efficiency.

In Figure \ref{fig:evaluation-rate-psnr}, the bit rate-PSNR curves of 4 high-performance compressors on 6 datasets are plotted and displayed (in which the rate-PSNR optimization targets are set for QoZ, FAZ, and HPEZ). Apparently, HPEZ has dominated this evaluation term, achieving the best PSNR under all bit rates on each dataset. This implies that, among the high-performance compressors, HPEZ can always provide the best quality of decompressed data (in terms of PSNR) under the same compression ratio, or can always yield the most compact compressed data for a certain PSNR constraint. On the CESM-ATM dataset, under PSNR of 70, HPEZ reaches around 360\% compression ratio improvement over the second-best QoZ 1.1. On the SegSalt dataset, under PSNR of 80 HPEZ achieves about 100\% compression ratio improvement over the second-best QoZ 1.1. There are approximately 20\% $\sim$ 80\% same-PSNR compression ratio improvements achieved by HPEZ on the other 4 datasets.
\begin{figure}[ht] 
\centering
\hspace{-10mm}
\subfigure[{RTM}]
{
\raisebox{-1cm}{\includegraphics[scale=0.25]{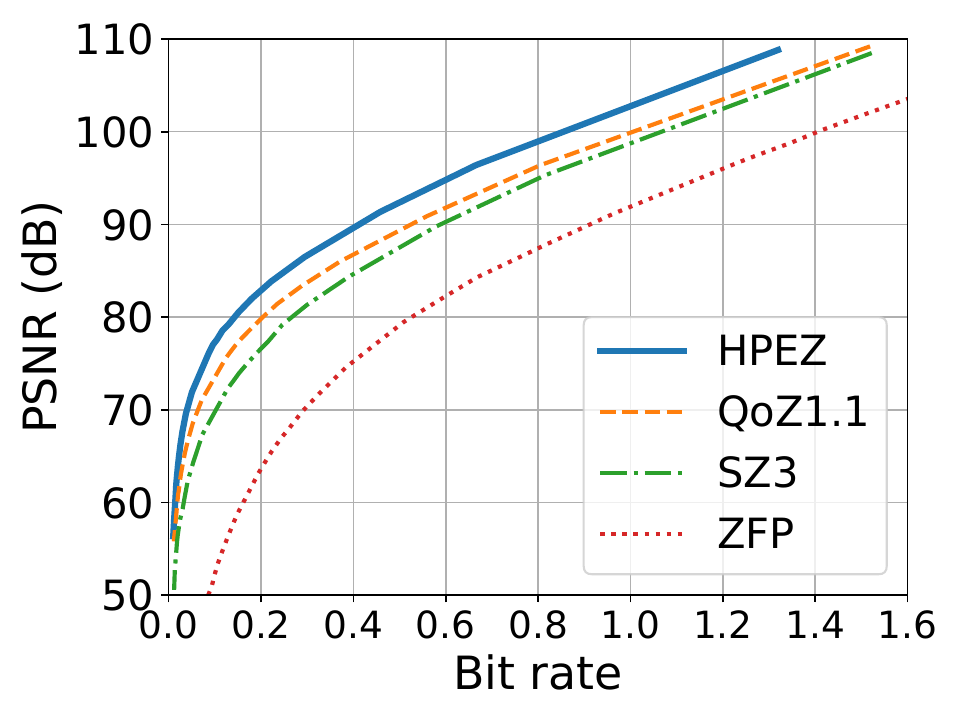}}
}%
\hspace{-2mm}
\subfigure[{CESM-ATM}]
{
\raisebox{-1cm}{\includegraphics[scale=0.25]{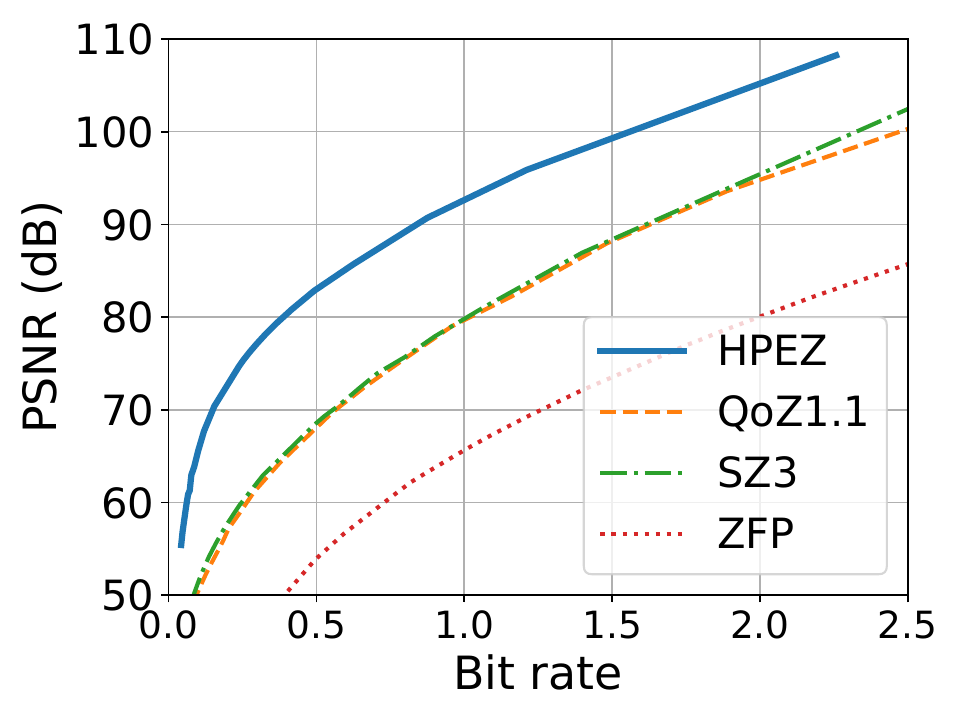}}
}%
\hspace{-2mm}
\subfigure[{JHTDB}]
{
\raisebox{-1cm}{\includegraphics[scale=0.25]{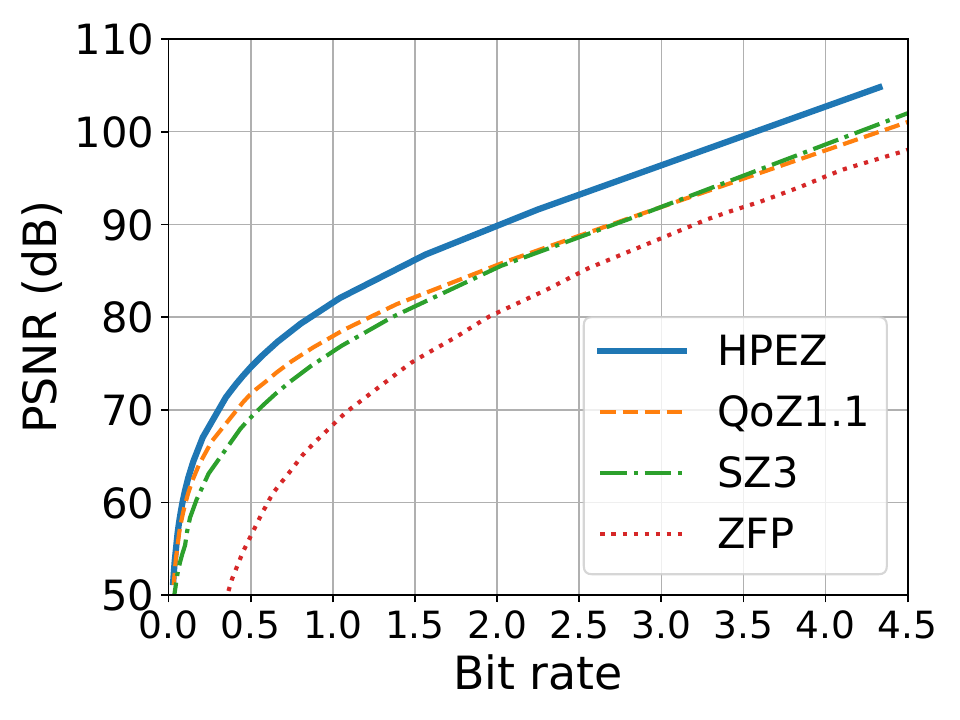}}
}%
\hspace{-10mm}
\vspace{-4mm}

\hspace{-10mm}
\subfigure[{Miranda}]
{
\raisebox{-1cm}{\includegraphics[scale=0.25]{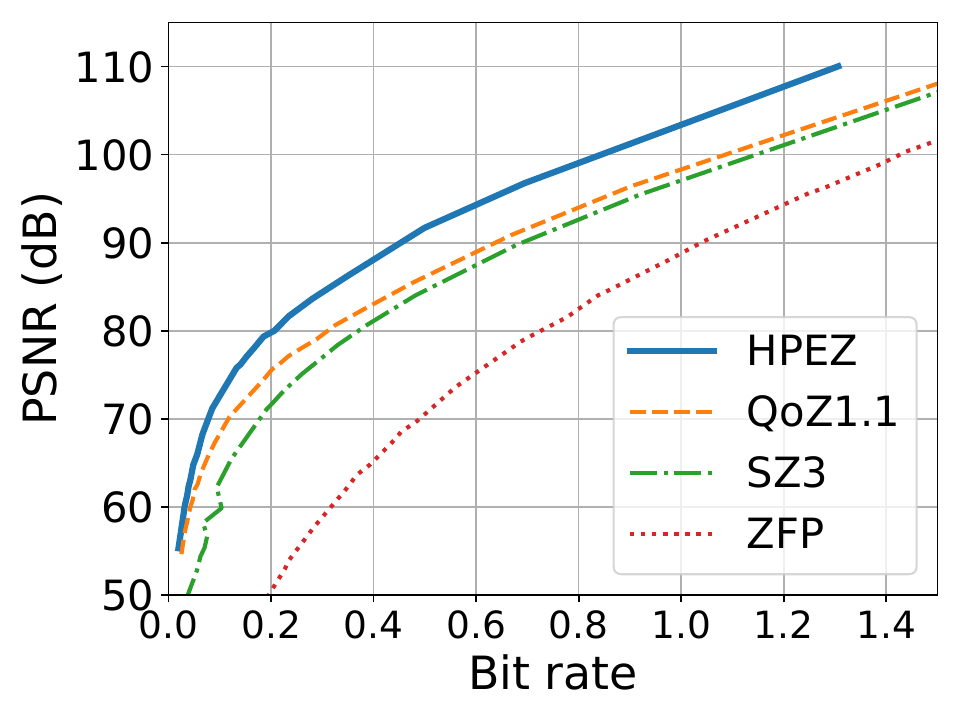}}%
}%
\hspace{-2mm}
\subfigure[{SCALE-LetKF}]
{
\raisebox{-1cm}{\includegraphics[scale=0.25]{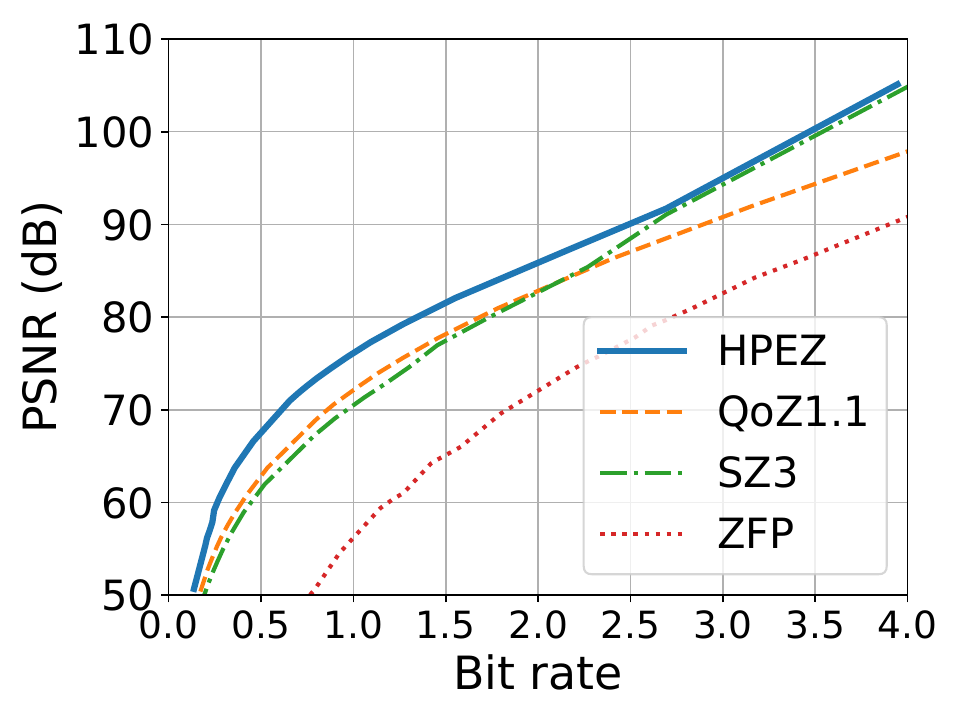}}%
}%
\hspace{-2mm}
\subfigure[{SegSalt}]
{
\raisebox{-1cm}{\includegraphics[scale=0.25]{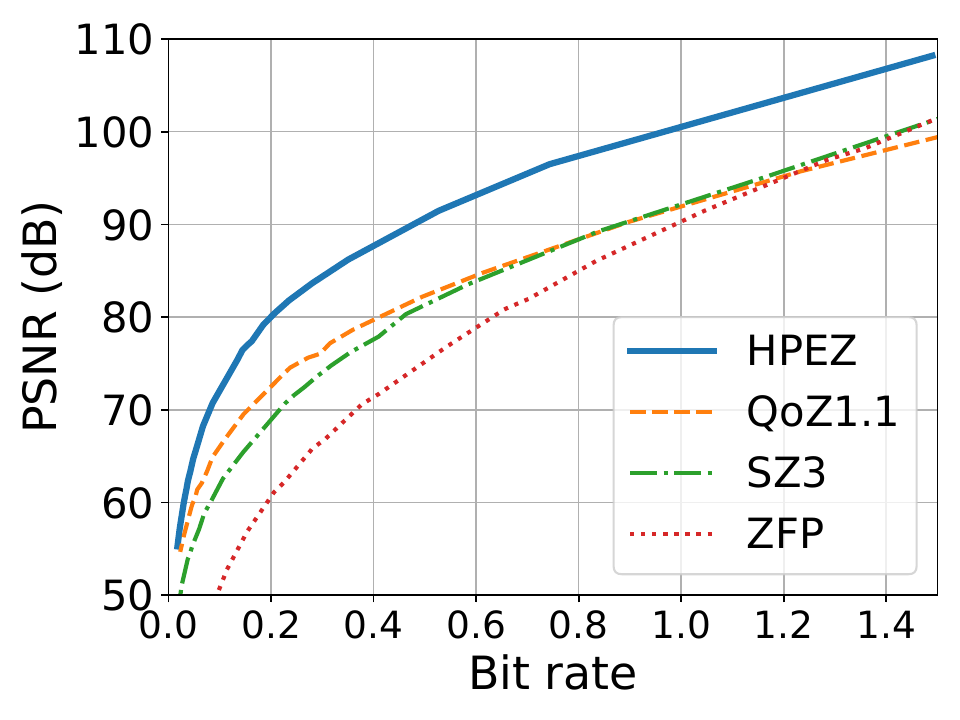}}%
}%
\hspace{-10mm}
\vspace{-2mm}

\caption{Rate-distortion (PSNR) plots of high-performance compressors. }
\label{fig:evaluation-rate-psnr}
\end{figure}

To evaluate the HPEZ compression quality with more quality metrics, we also checked the SSIM of the decompressed results of each high-performance compressor, and those results are illustrated in Figure \ref{fig:evaluation-rate-ssim}. Same as the PSNR, HPEZ undoubtedly presented the best SSIM under the same compressed size over all other evaluated high-performance compressors. Under the same compression bit rate, on multiple datasets including RTM, JHTDB, SCALE-LetKF, and SegSalt, there are 20\% $\sim$ 40\% SSIM improvements from HPEZ over the second-best QoZ 1.1. The SSIM improvements can be even much larger in the case of the CESM-ATM dataset.
\begin{figure}[ht] 
\centering
\hspace{-10mm}
\subfigure[{RTM}]
{
\raisebox{-1cm}{\includegraphics[scale=0.25]{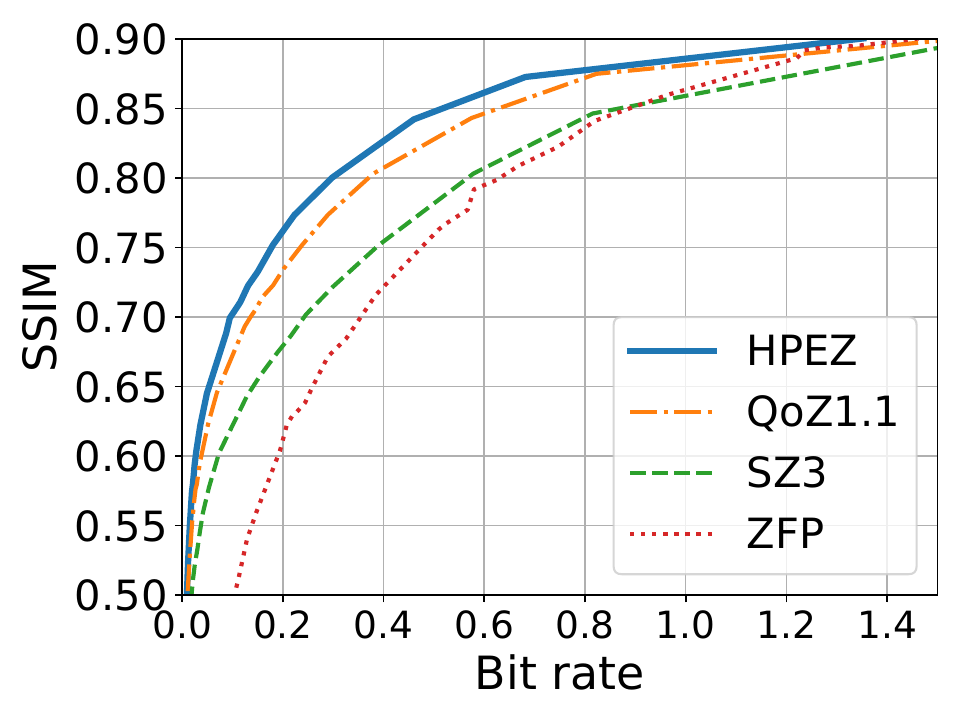}}
}
\hspace{-3mm}
\subfigure[{CESM-ATM}]
{
\raisebox{-1cm}{\includegraphics[scale=0.25]{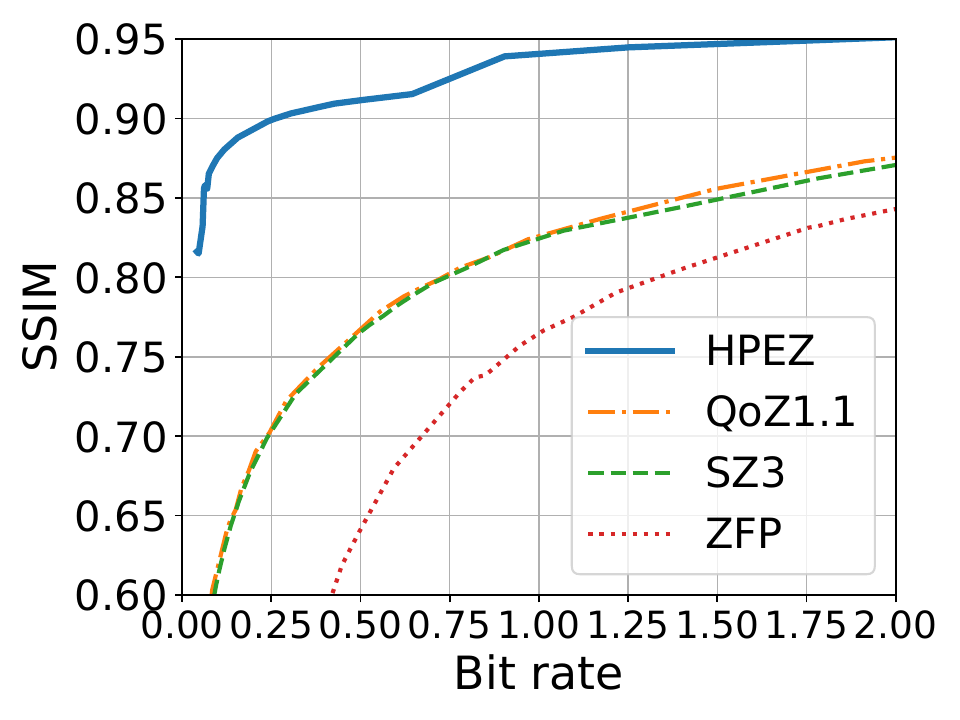}}
}
\hspace{-2mm}
\subfigure[{JHTDB}]
{
\raisebox{-1cm}{\includegraphics[scale=0.25]{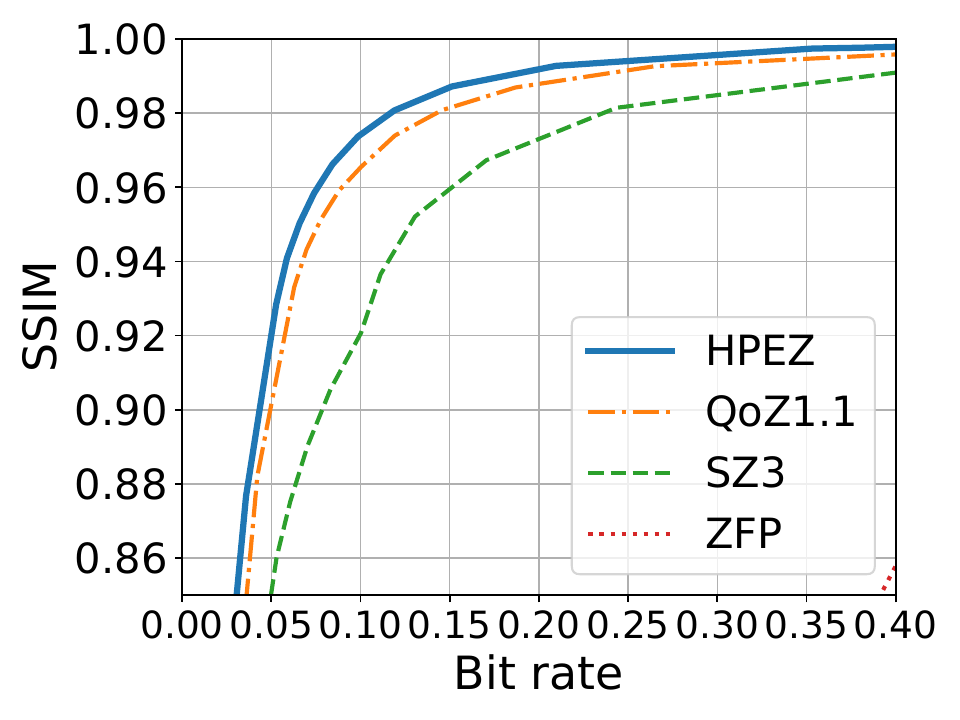}}
}
\hspace{-10mm}
\vspace{-4mm}

\hspace{-10mm}
\subfigure[{Miranda}]
{
\raisebox{-1cm}{\includegraphics[scale=0.25]{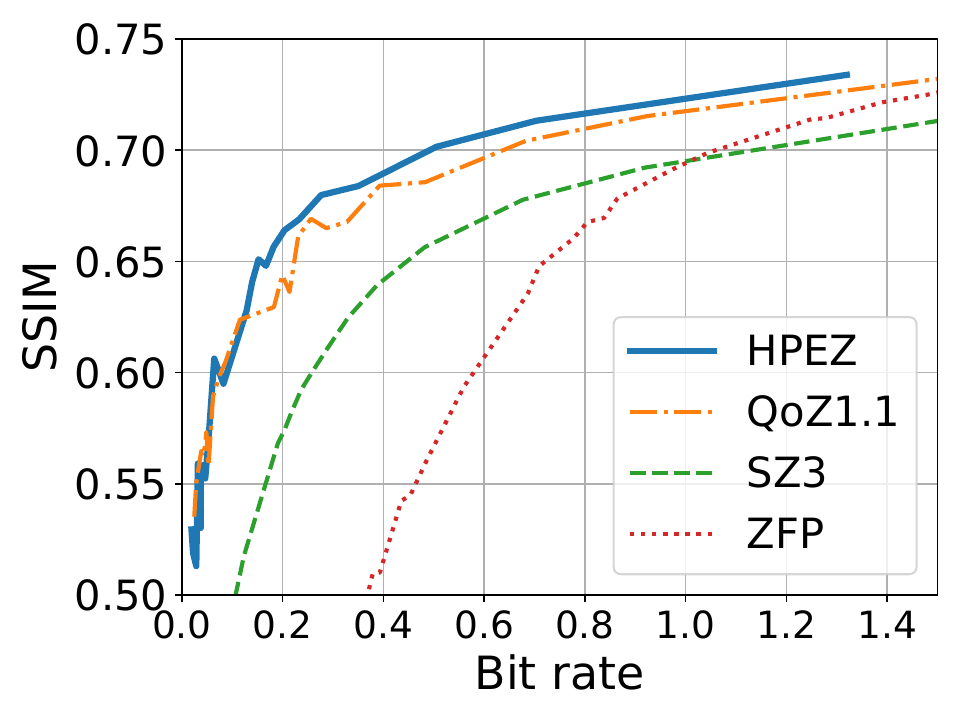}}%
}
\hspace{-2mm}
\subfigure[{SCALE-LetKF}]
{
\raisebox{-1cm}{\includegraphics[scale=0.25]{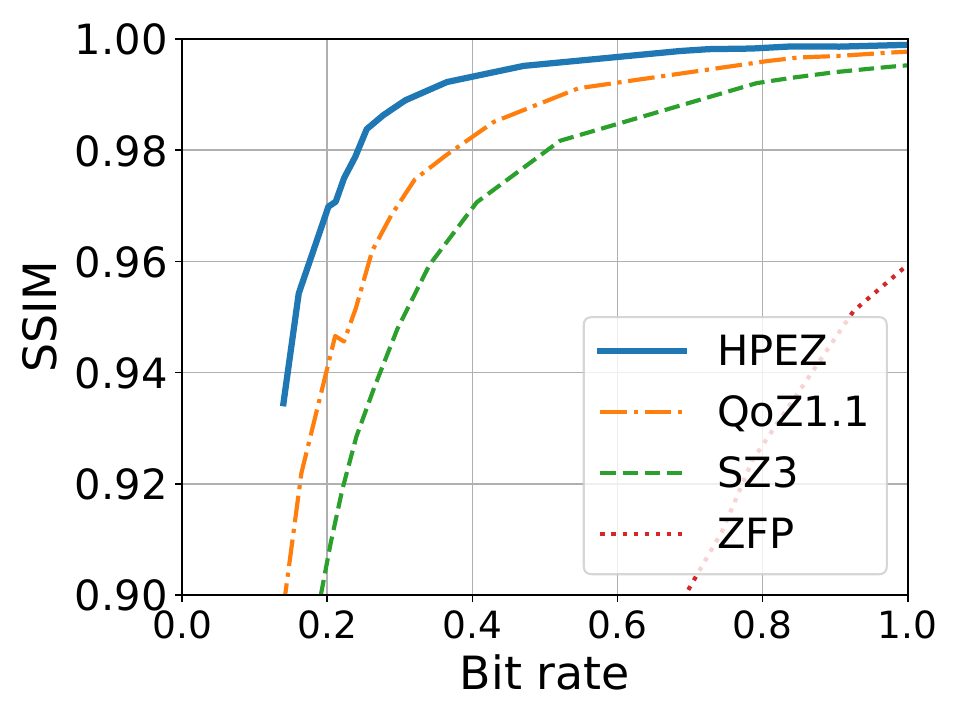}}%
}
\hspace{-2mm}
\subfigure[{SegSalt}]
{
\raisebox{-1cm}{\includegraphics[scale=0.25]{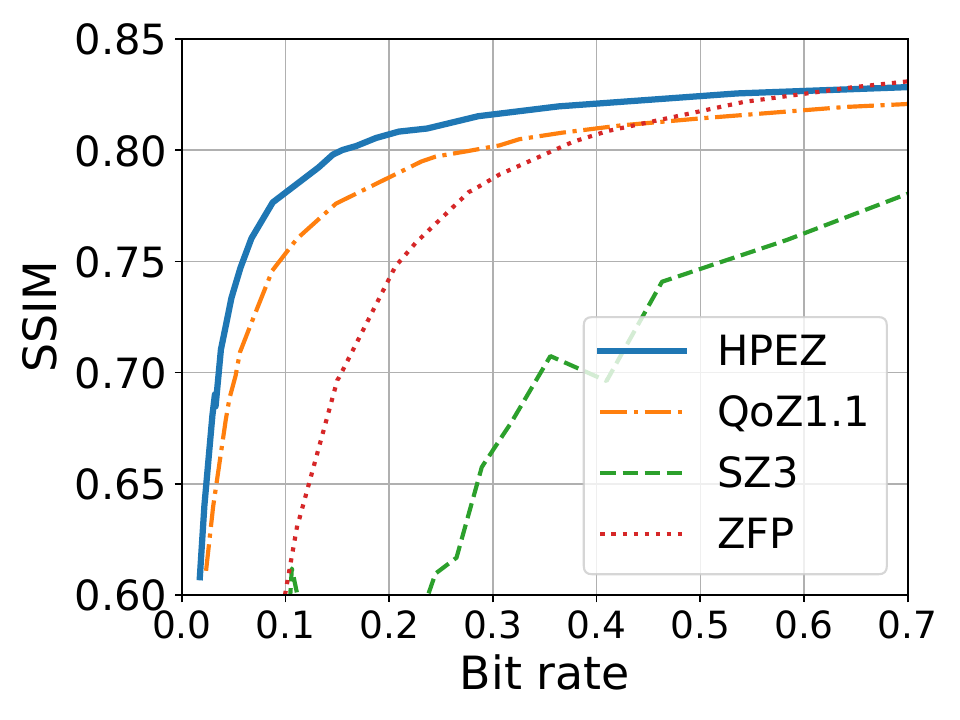}}%
}
\hspace{-10mm}
\vspace{-2mm}

\caption{SSIM of high-performance compressors.}
\label{fig:evaluation-rate-ssim}
\end{figure}

In our analysis, the outstanding compression quality of HPEZ as a high-performance compressor is attributed to 3 aspects: First, the advanced interpolation techniques described in Section \ref{sec:interp} have significantly raised the interpolation-based prediction accuracy for smooth datasets such as RTM, Miranda, SegSalt, and JHTDB. Next, avoiding interpolations along non-smooth directions, the compression for datasets with non-smooth dimensions (e.g. SCALE-LetKF and CESM-ATM) have been obviously enhanced by the dynamic dimension freezing technique (Section \ref{sec:freeze}). Third, the block-wise interpolation tuning (Section \ref{sec:blockwise}) fine-tunes the interpolation on each data block, further optimizing the overall compression. In Section \ref{sec:abla}, we will feature the contribution of each HPEZ design component with more experimental results and in-depth analysis. 

Lastly, we would like to claim that, in several cases, the compression quality (i.e. rate-distortion) of HPEZ can be at least comparable with the high-ratio compressors. In the comparisons between HPEZ and high-ratio compressors displayed in Figure \ref{fig:evaluation-rate-psnr-hr}, although on the Miranda dataset (Figure \ref{fig:evaluation-rate-psnr-hr} (b)) HPEZ has quality gaps to the SPERR and FAZ, Figure \ref{fig:evaluation-rate-psnr-hr} (a), (c) and (d) indicate that HPEZ may achieve close or even similar rate-distortion to the high-ratio compressors, with a compression speed far higher than them.
\begin{figure}[ht] 
\centering
\hspace{-10mm}
\subfigure[{JHTDB}]
{
\raisebox{-1cm}{\includegraphics[scale=0.25]{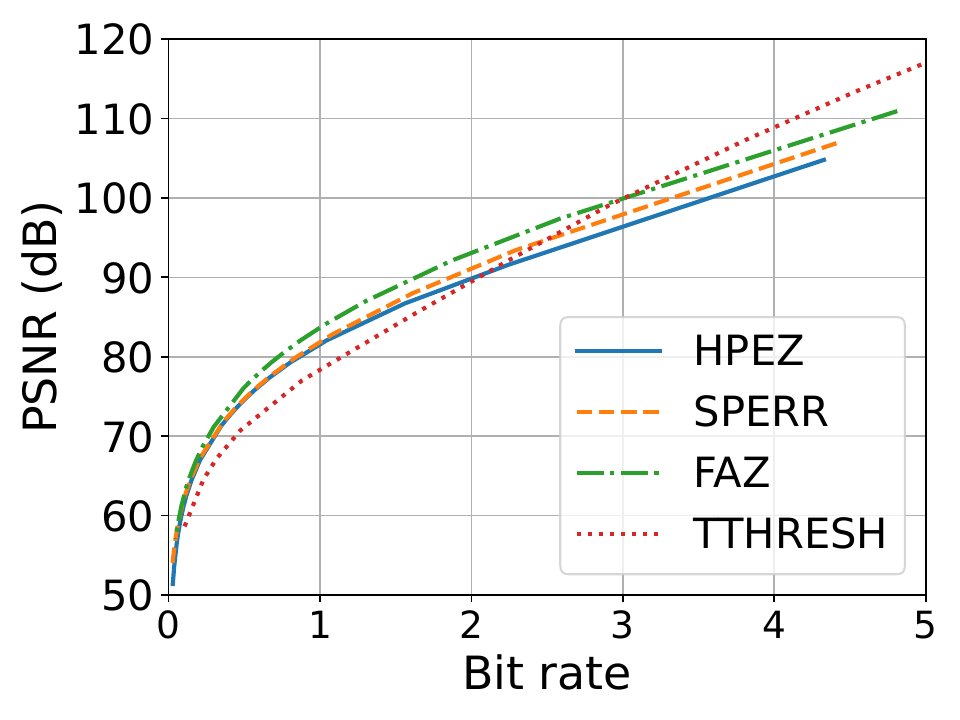}}
}
\hspace{-4mm}
\subfigure[{Miranda}]
{
\raisebox{-1cm}{\includegraphics[scale=0.25]{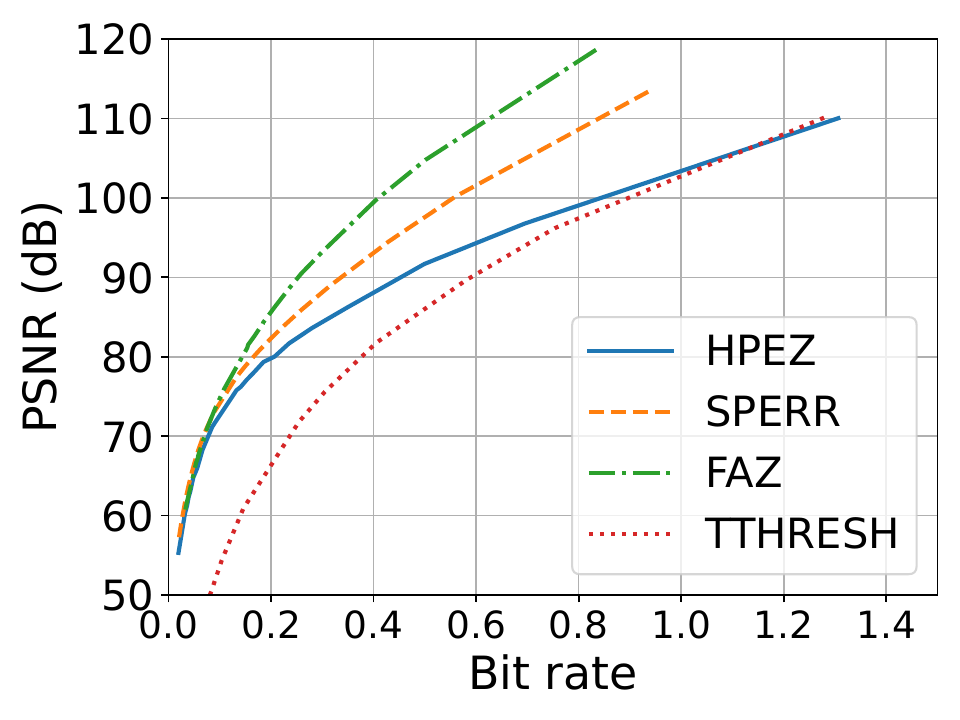}}%
}
\hspace{-10mm}
\vspace{-4mm}

\hspace{-10mm}
\subfigure[{SCALE-LetKF}]
{
\raisebox{-1cm}{\includegraphics[scale=0.25]{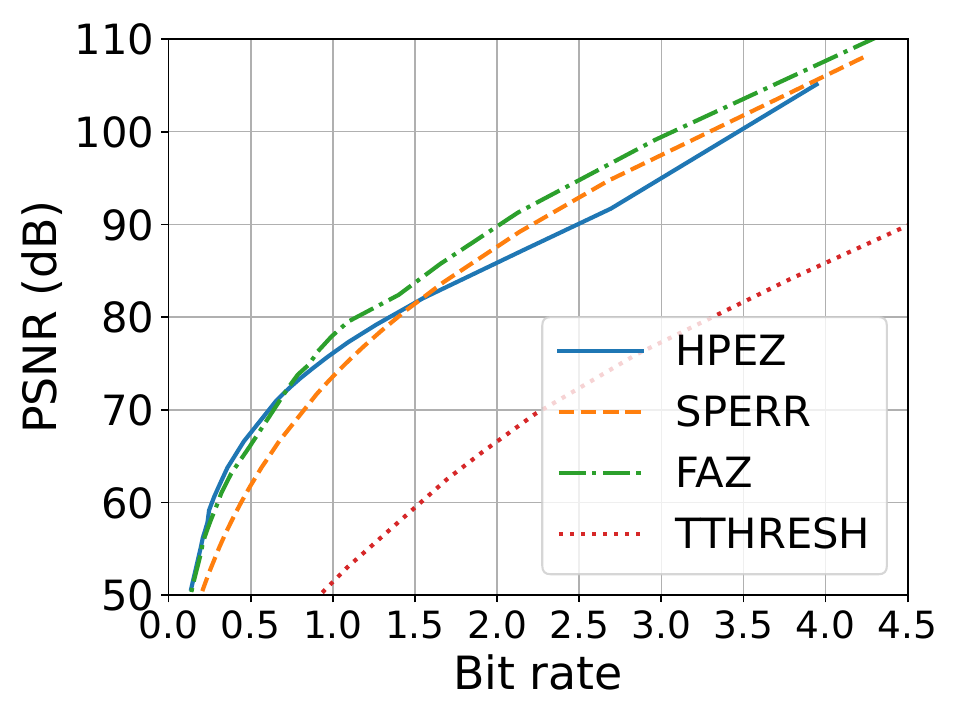}}%
}
\hspace{-4mm}
\subfigure[{CESM-ATM}]
{
\raisebox{-1cm}{\includegraphics[scale=0.25]{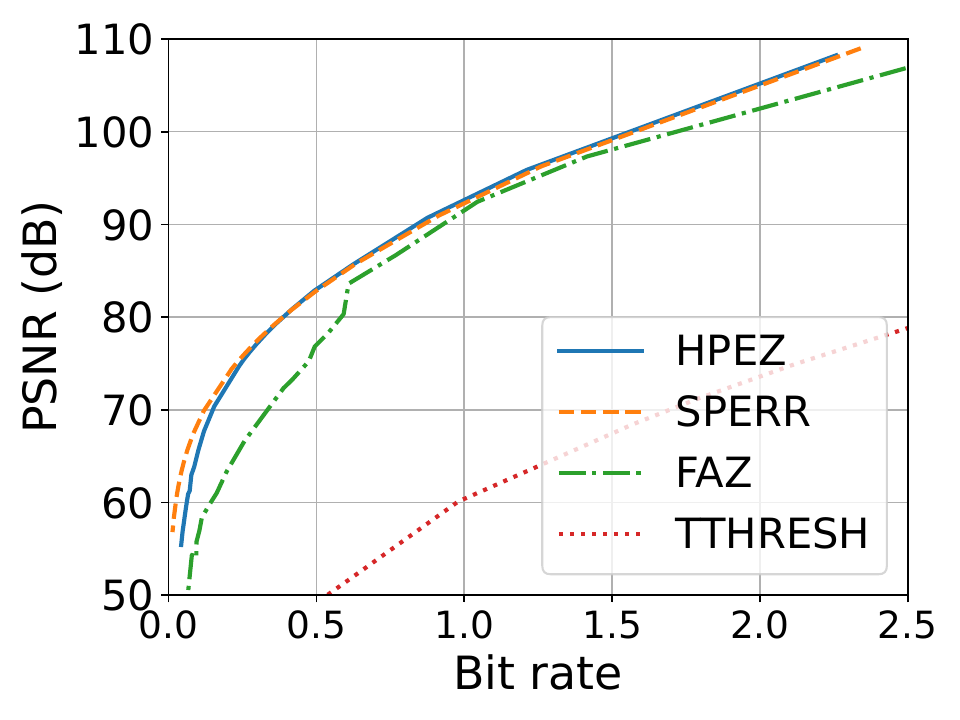}}%
}
\hspace{-10mm}
\vspace{-2mm}

\caption{Rate-PSNR of HPEZ and high-ratio compressors (HPEZ's speed is 2x-17x of high-ratio compressors).}
\label{fig:evaluation-rate-psnr-hr}
\end{figure}

\vspace{-2mm}
\subsubsection{Parallel data transfer test on the distributed database}

In Section \ref{sec:rd}, we have analyzed how HPEZ over-performs other high-performance compressors in terms of compression quality. Furthermore, we will examine whether HPEZ can over-perform existing state-of-the-art error-bounded lossy compressors including high-ratio compressors in real-world use cases in which the compression and decompression time need to be taken into account. To this end, we have designed a real-world scale parallel data transfer experiment on the distributed scientific database. In this experiment, a distributed scientific database is established on multiple machines, and to accomplish the target of fast data transfer and access between the super-computers, instead of costing unacceptable time transferring the original exascale data, a lossy compressor compresses and decompresses the data in parallel on the source and destination machine, and only the compressed data with a highly-reduced size are transferred between the machines. The total time cost of this task is the accumulation of the local data I/O time, compression time, decompression time, and transfer time of the compressed data. 

\begin{table}[t]
\centering
\footnotesize
  \caption {Compression-based parallel data transfer throughput time (in seconds, 2048 cores, under PSNR=80). Inter-machine speed is the transfer speed of compressed data between 2 machines.} 
  \vspace{-2mm}
  \label{tab:eva-para4} 
\begin{adjustbox}{width=0.99\columnwidth}
\begin{tabular}{|c|c|c|c|c|c|c|c|c|c|c|}
\hline
\multirow{2}{*}{Dataset} & \multicolumn{1}{c|}{\multirow{2}{*}{Direction}} & Inter-machine   & \multirow{2}{*}{SZ3} & \multirow{2}{*}{ZFP} & \multirow{2}{*}{QoZ 1.1} & \multirow{2}{*}{SPERR 0.6} & \multirow{2}{*}{FAZ} & \multirow{2}{*}{TTHRESH} & \multirow{2}{*}{HPEZ} & Improve \\
                         & \multicolumn{1}{c|}{}                           & Speed (GB/s)    &                      &                      &                          &                        &                      &                          &                          & (\%)    \\ \hline
CESM-ATM                 & Anvil to Bebop                                  & 0.79 $\sim$0.91 & 1934                 & 3221                 & 1812                     & 1560                   & 1586                 & 7752                     & \textbf{1005}            & 35.6    \\ \cline{2-11} 
(41TB)                   & Bebop to Anvil                                  & 0.95 $\sim$1.19 & 1614                 & 2695                 & 1553                     & 1522                   & 1544                 & 8560                      & \textbf{916}             & 39.8    \\ \hline
RTM                      & Anvil to Bebop                                  & 0.58 $\sim$1.19 & 198                  & 362                  & \textbf{173}             & 277                    & 494                  & 527                      & 181                      & -4.8    \\ \cline{2-11} 
(14TB)                   & Bebop to Anvil                                  & 0.47 $\sim$1.04 & 189                  & 524                  & \textbf{166}             & 296                    & 474                  & 560                      & 182                      & -9.5    \\ \hline
Miranda                  & Anvil to Bebop                                  & 0.46 $\sim$1.04 & 49                   & 84                   & 44                       & 72                     & 87                   & 121                      & \textbf{39}              & 11.3    \\ \cline{2-11} 
(2TB)                    & Bebop to Anvil                                  & 0.54 $\sim$0.82 & 46                   & 117                  & 49                       & 71                     & 86                   & 120                     & \textbf{43}              & 6.5     \\ \hline
SCALE-LetKF              & Anvil to Bebop                                  & 0.88 $\sim$0.94 & 873                  & 1354                 & 820                      & 1037                   & 782                  & 2354                      & \textbf{728}             & 7.0     \\ \cline{2-11} 
(13TB)                   & Bebop to Anvil                                  & 1.05 $\sim$1.15 & 745                  & 1181                 & 707                      & 1007                   & 670                  & 2002                     & \textbf{624}             & 6.8     \\ \hline
JHTDB                    & Anvil to Bebop                                  & 0.83 $\sim$1.15 & 567                  & 826                  & 527                      & 645                    & 583                  & 835                      & \textbf{417}             & 20.9    \\ \cline{2-11} 
(10TB)                   & Bebop to Anvil                                  & 0.97 $\sim$1.18 & 486                  & 707                  & 473                      & 648                    & 574                  & 883                      & \textbf{366}             & 22.7    \\ \hline
SegSalt                  & Anvil to Bebop                                  & 0.63 $\sim$1.18 & 163                  & 289                  & 174                      & 221                    & 251                  & 393                      & \textbf{137}             & 15.9    \\ \cline{2-11} 
(8TB)                    & Bebop to Anvil                                  & 0.76 $\sim$1.06 & 167                  & 241                  & 153                      & 213                    & 265                  & 300                      & \textbf{132}             & 14.0    \\ \hline
\end{tabular}
\end{adjustbox}
\end{table}
To convincingly prove the effectiveness of HPEZ for the parallel data transfer task, we conduct the corresponding experiments under a certain configuration. For a parallel test with $p$ cores, we augment the datasets by $p$ times then let each core compress and decompress the data in the original size. Using 2048 cores, we leveraged the 7 compressors to compress and transfer the datasets bidirectionally between the Anvil and Bebop supercomputer, constraining the decompressed data following the same distortion (PSNR = 80). The inter-machine data transfer is supported by the Globus Transfer Service \cite{globus1,globus2,globus3}, which is an efficient and widely adopted data transfer service in scientific research and education fields. Table \ref{tab:eva-para4} presents data transfer speed and the time cost with each compressor for each dataset. On most of the datasets tested (except for the RTM), HPEZ improves the optimal overall transfer time by 5\% $\sim$ 40\%, and in the worst case (on the RTM dataset), it is just slightly worse than QoZ 1.1. Therefore, the optimized balance of compression quality and efficiency of HPEZ does contribute to its utility in real-world large-scale parallel data transfer tasks.

Due to the computing resource limitation for executing the multi-core large-scale data transfer tests and repeating them with different datasets, compressors, and configurations, we have also designed a model for approximating the actual time costs in those tasks. For a specific core number $p$ and a data transfer speed $s$, we use the sequential compression/decompression speed of the compression/decompression with the same per-core data to estimate the parallel compression/decompression time cost, and the approximated data transfer time is just the compressed data size divides the transfer speed $s$. With those approximation methods, for each dataset, we approximate the time costs under a variety of compression error bounds, then plot and present the time cost-PSNR curves in Figure \ref{fig:evaluation-para}. The compressor speeds are acquired on the Anvil machine introduced in Section \ref{sec:env}, the core numbers are 2048, and the data transfer speed is set to 1GB/s (according to the experimental results in Table \ref{tab:eva-para4}). From the plots, we can claim that, for this task, HPEZ has the potential to keep an advantage over the other existing compressors. On Miranda, CESM-ATM, and JHTDB datasets, with the approximations, HPEZ exhibits the minimized time cost for each fixed PSNR. On RTM, SCALE-LetKF, and SegSalt datasets, HPEZ may still always be the top-performing compressor and can have the optimized time cost in wide ranges of PSNR.

\begin{figure}[ht] 
\centering
\hspace{-10mm}
\subfigure[{RTM}]
{
\raisebox{-1cm}{\includegraphics[scale=0.25]{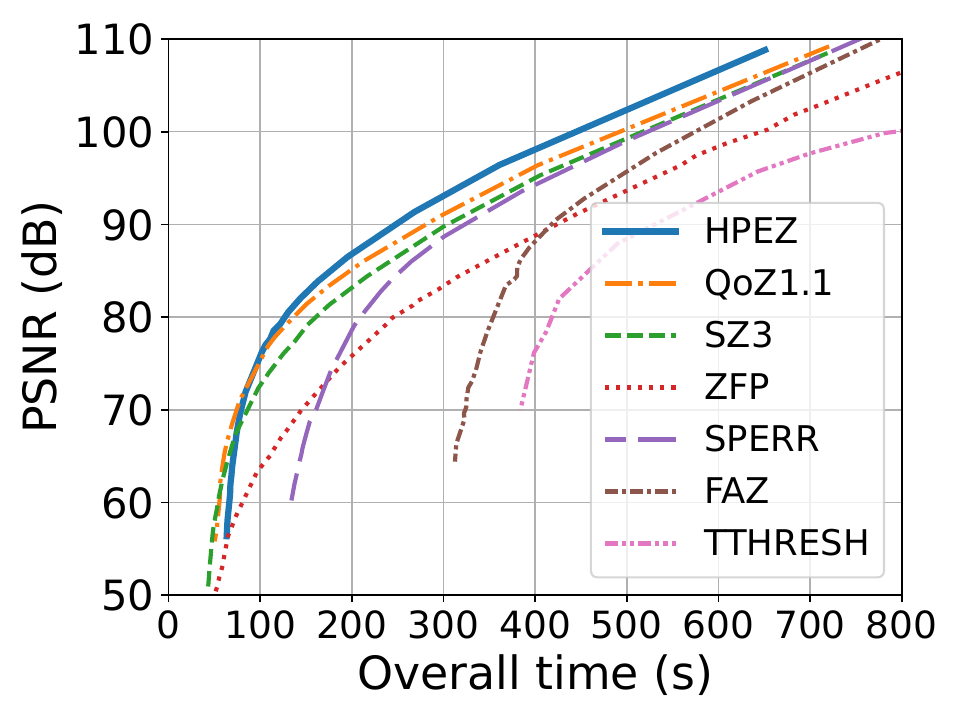}}
}
\hspace{-2mm}
\subfigure[{CESM-ATM}]
{
\raisebox{-1cm}{\includegraphics[scale=0.25]{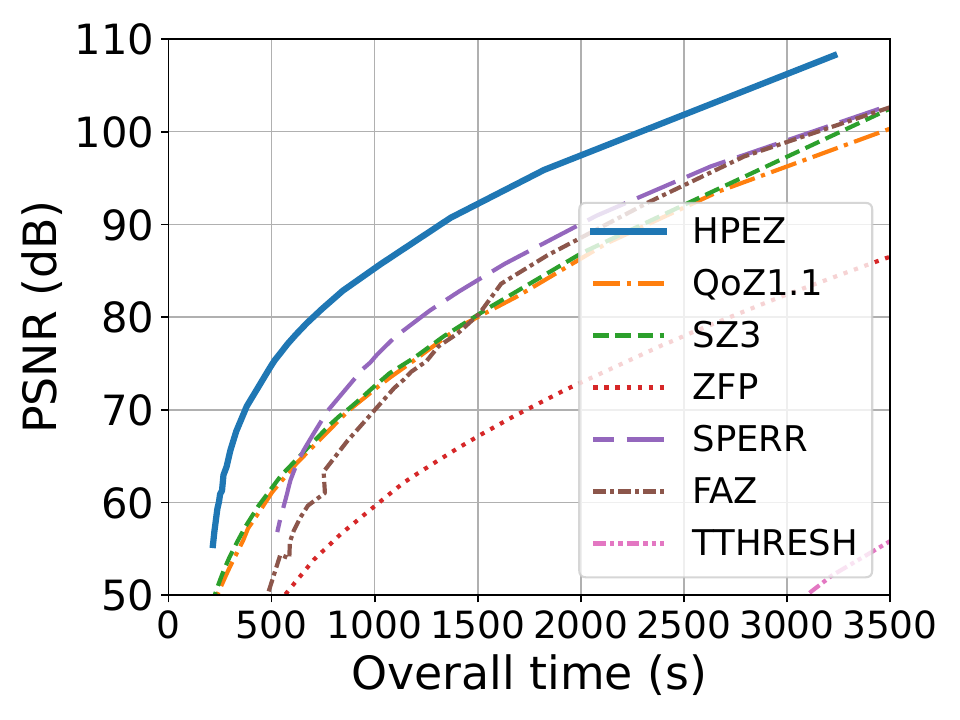}}
}
\hspace{-2mm}
\subfigure[{JHTDB}]
{
\raisebox{-1cm}{\includegraphics[scale=0.25]{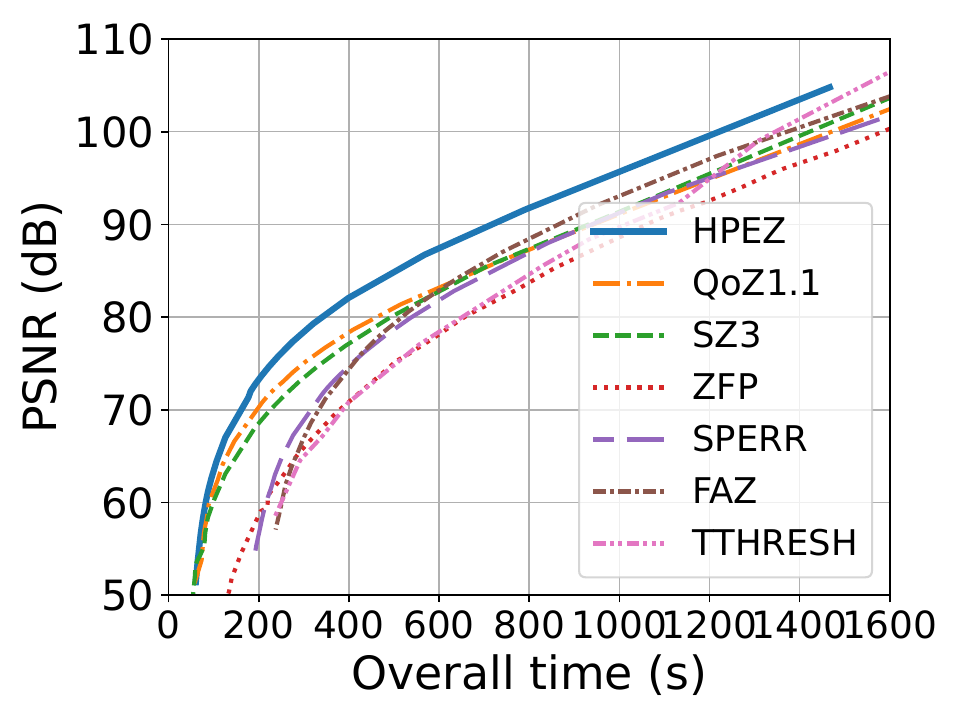}}
}
\hspace{-10mm}
\vspace{-4mm}

\hspace{-10mm}
\subfigure[{Miranda}]
{
\raisebox{-1cm}{\includegraphics[scale=0.25]{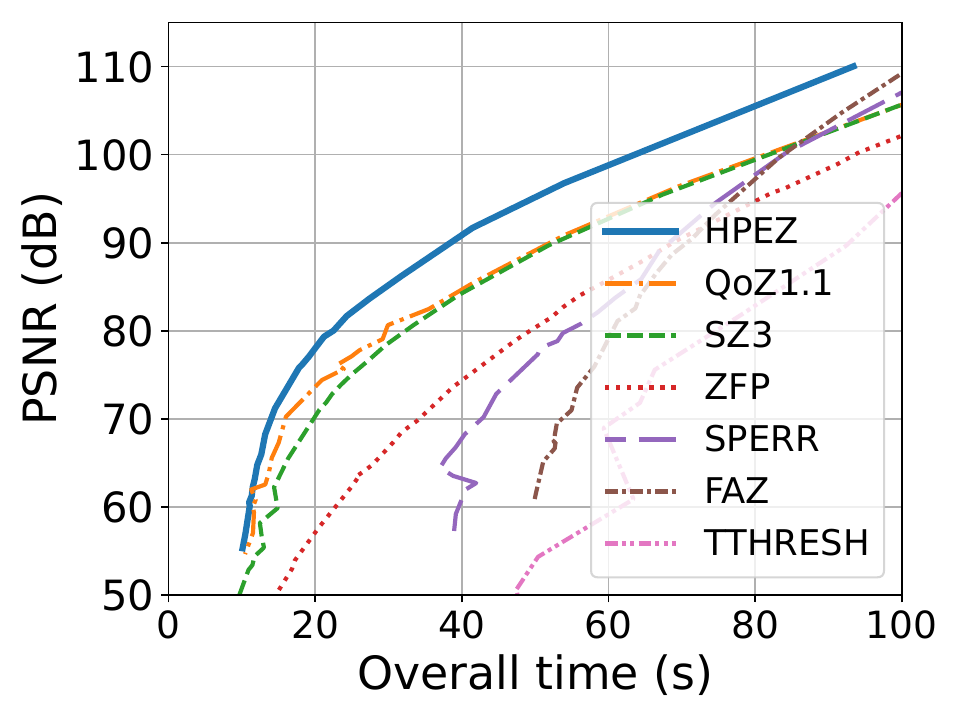}}%
}
\hspace{-2mm}
\subfigure[{SCALE-LetKF}]
{
\raisebox{-1cm}{\includegraphics[scale=0.25]{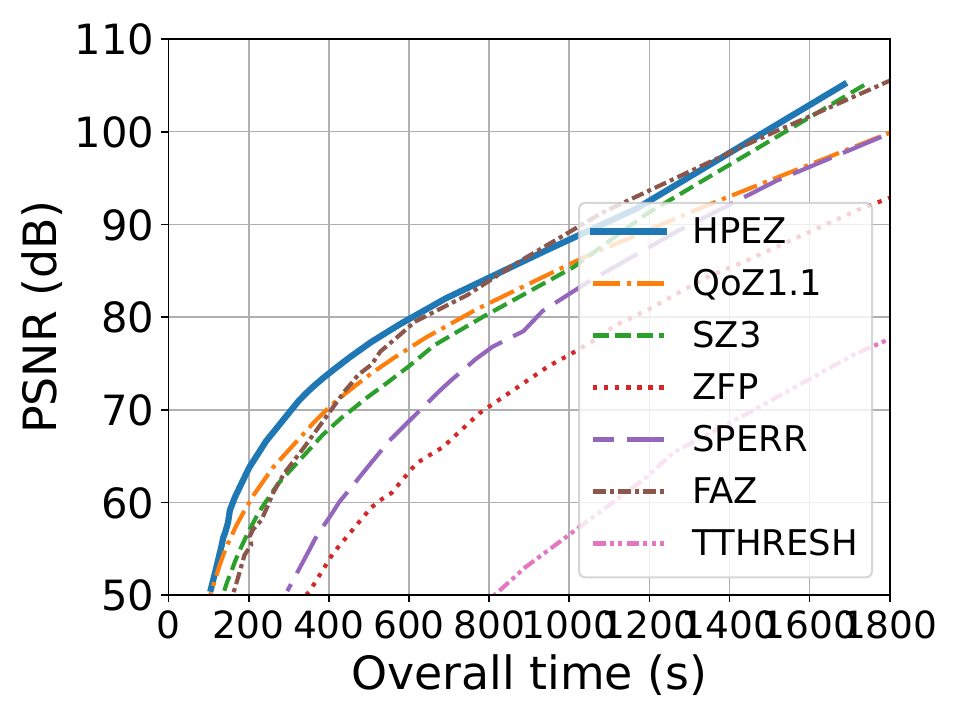}}%
}
\hspace{-2mm}
\subfigure[{SegSalt}]
{
\raisebox{-1cm}{\includegraphics[scale=0.25]{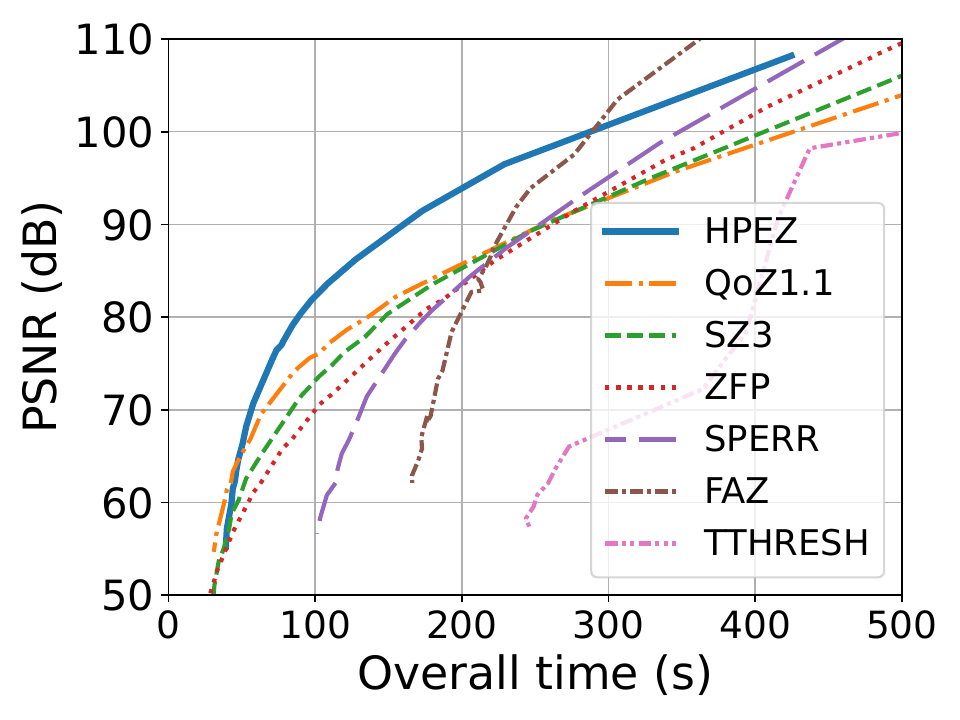}}%
}
\hspace{-10mm}
\vspace{-2mm}

\caption{Parallel data transfer time approximation and decompression PSNR (simulation on the Anvil supercomputer, $p=2048$, $s=1GB/s$).}
\label{fig:evaluation-para}
\end{figure}

\subsubsection{Case study: decompression visualizations}
As an example of the effectiveness of the HPEZ compression, in this section, we propose a case study of the compression tasks, visualizing the decompression outputs from various high-performance compressors. The example data input is the QS field (getting logarithmized in preprocessing) from the SCALE-LetKF dataset, and we compress it with HPEZ and 2 high-performance compressors: QoZ and ZFP (we omit SZ3 in this test because QoZ and SZ3 have close speeds and QoZ has better compression quality than SZ3) under similar compression ratios. The visualizations of 2-D slices from the original data and decompressed data are presented in Figure \ref{fig:evaluation-vis}. In this case, among the decompression results with very close compression ratios, the decompression result of HPEZ (Figure \ref{fig:evaluation-vis} (b)) achieves the lowest data distortion with the highest PSNR (56.8). Moreover, regarding the magnified regions in Figure \ref{fig:evaluation-vis}, compared to the decompression results of QoZ (Figure \ref{fig:evaluation-vis} (c), PSNR=52.7), HPEZ has better preserved the local data patterns in the original input (Figure \ref{fig:evaluation-vis} (a)). This case is an example to show the strong capability of HPEZ in providing high-quality compression results with high compression speed.

\begin{figure}[ht] 
\hspace{-9mm}
\subfigure[{SCALE-LetKF (QS)}]
{
\raisebox{-1cm}{\includegraphics[scale=0.35]{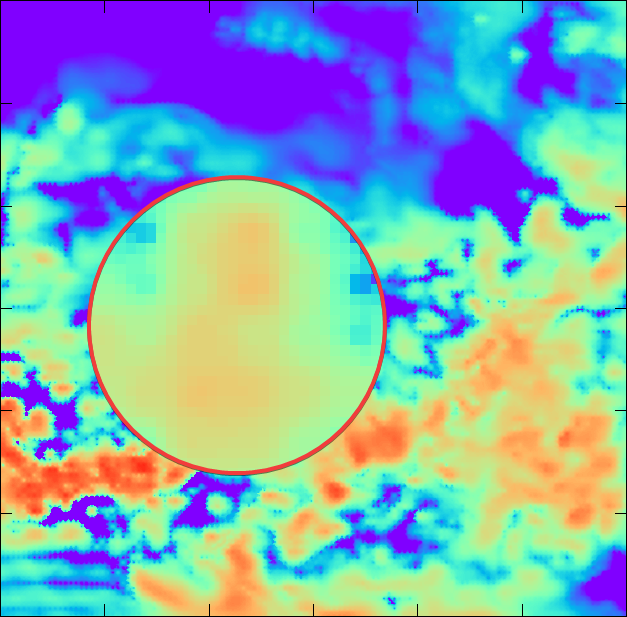}}
}
\hspace{-2mm}
\subfigure[{HPEZ (CR=127,PSNR=56.8)}]
{
\raisebox{-1cm}{\includegraphics[scale=0.35]{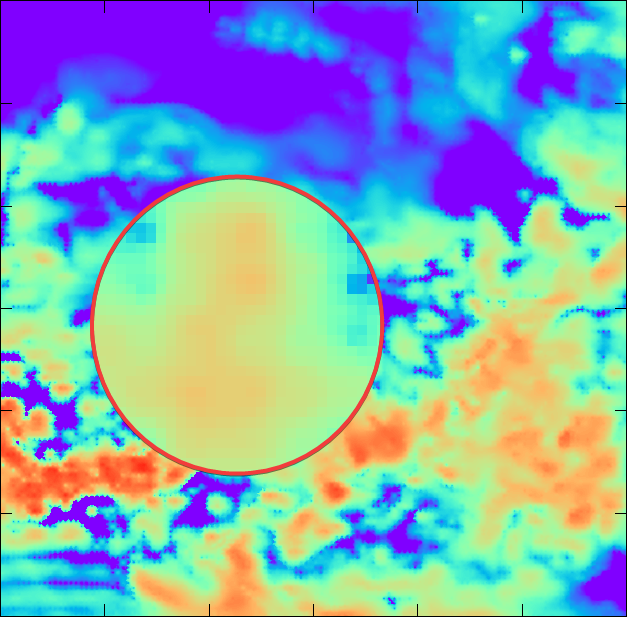}}
}
\hspace{-10mm}
\vspace{-2mm}

\hspace{-10mm}
\subfigure[{QoZ (CR=126,PSNR=48.4)}]
{
\raisebox{-1cm}{\includegraphics[scale=0.35]{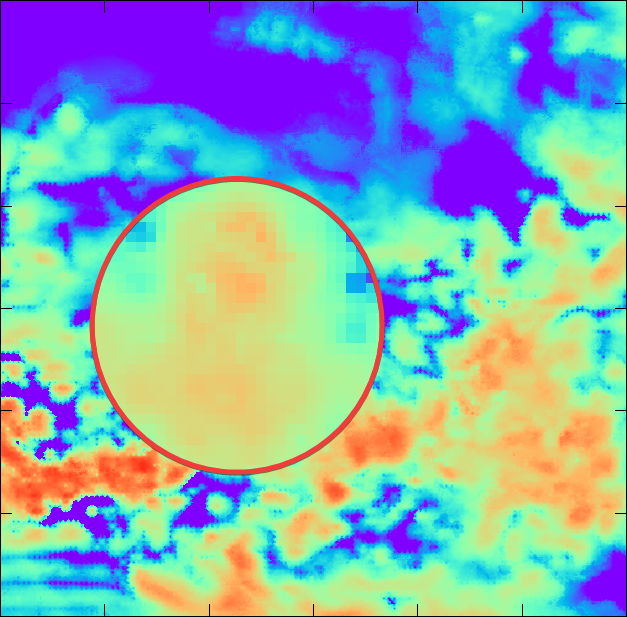}}
}
\hspace{-2mm}
\subfigure[{ZFP (CR=118,PSNR=21.0)}]
{
\raisebox{-1cm}{\includegraphics[scale=0.35]{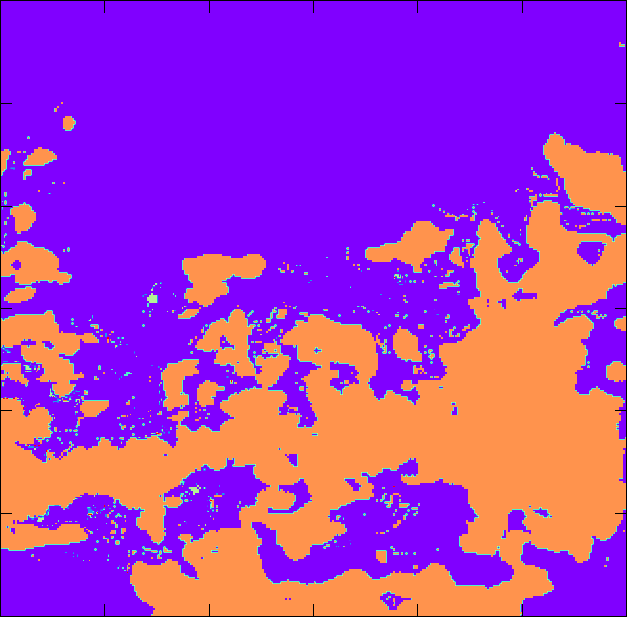}}%
}
\hspace{-10mm}
\vspace{-3mm}

\caption{Visualization of SCALE-QS field (logarithmized) and the decompressed data. }
\label{fig:evaluation-vis}
\end{figure}

\vspace{-2mm}
\subsubsection{Compression of HPEZ on integer datasets}
In this section, we propose the compression rate-distortion of HPEZ on the 2 integer datasets described in Section \ref{sec:setup}. Those datasets are scientific images and movies, therefore the experimental results with them can also reflect the potential of HPEZ to be leveraged on more integer-based datasets such as natural images and videos. Figure \ref{fig:evaluation-rate-psnr-int} contains the rate-PSNR curves from HPEZ and other integer-supportive high-performance compressors (SZ3 and QoZ). Apparently, HPEZ has comparatively excellent rate-distortion on the integer datasets as well as on the floating point datasets, presenting the optimized or near-optimized PSNR under the same bit rate.

\begin{figure}[ht] 
\centering
\hspace{-10mm}
\subfigure[{APS (2D Image)}]
{
\raisebox{-1cm}{\includegraphics[scale=0.25]{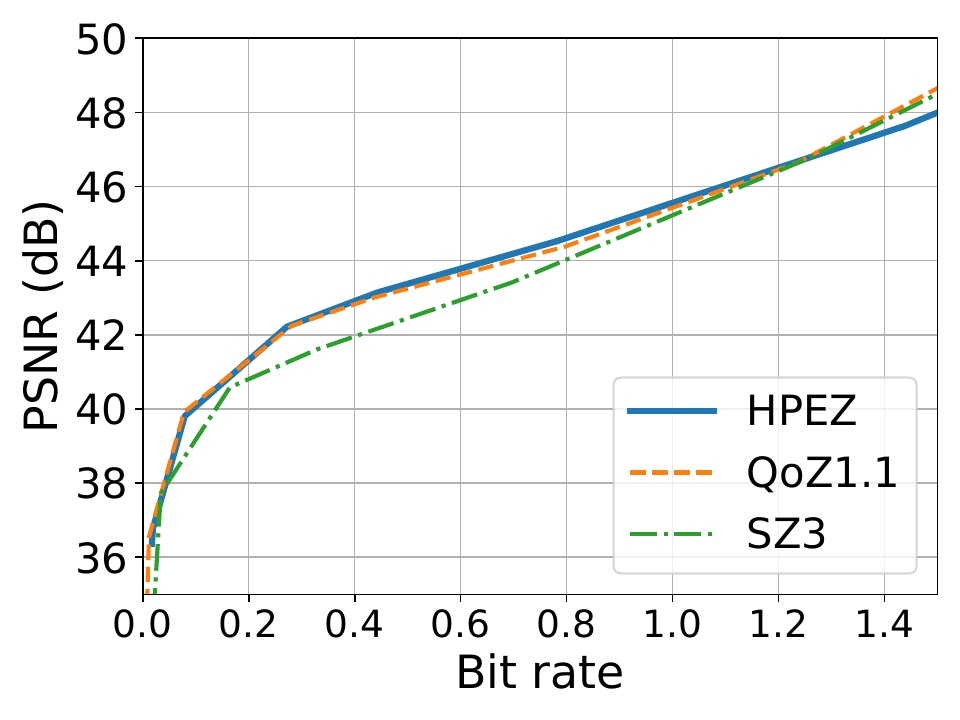}}
}
\hspace{-4mm}
\subfigure[{NSTX-GPI (3D Movie)}]
{
\raisebox{-1cm}{\includegraphics[scale=0.25]{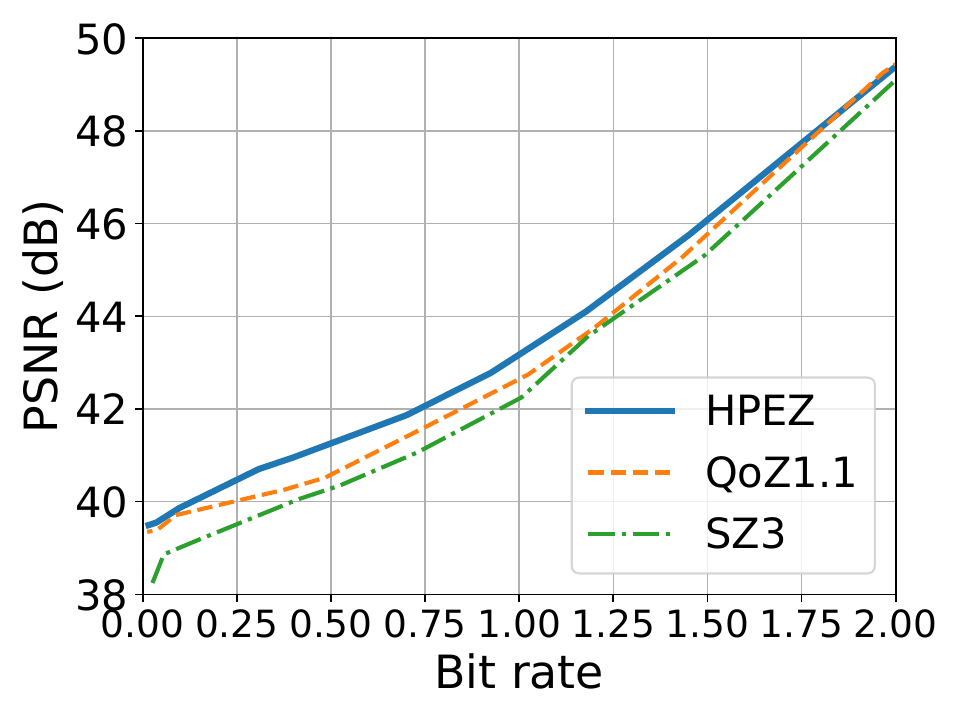}}%
}
\hspace{-10mm}
\vspace{-2mm}
\caption{Rate-PSNR on integer datasets.}
\label{fig:evaluation-rate-psnr-int}
\end{figure}
\subsubsection{Ablation study}
\label{sec:abla}
To better understand how HPEZ can generate high-quality compression outputs with comparatively fast speeds, we decompose the design of HPEZ, aggregating the design components to QoZ 1.1 one by one for determining and quantifying the compression improvement brought by each component.

\begin{figure}[ht] 
\centering
\hspace{-10mm}
\subfigure[{RTM}]
{
\raisebox{-1cm}{\includegraphics[scale=0.25]{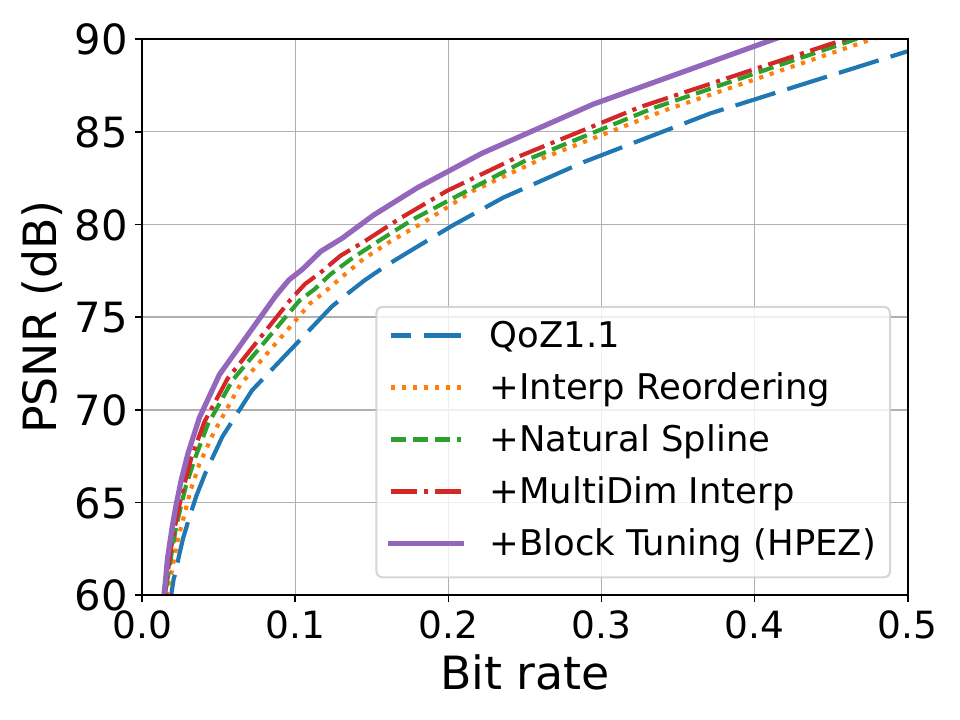}}
}
\hspace{-2mm}
\subfigure[{CESM-ATM}]
{
\raisebox{-1cm}{\includegraphics[scale=0.25]{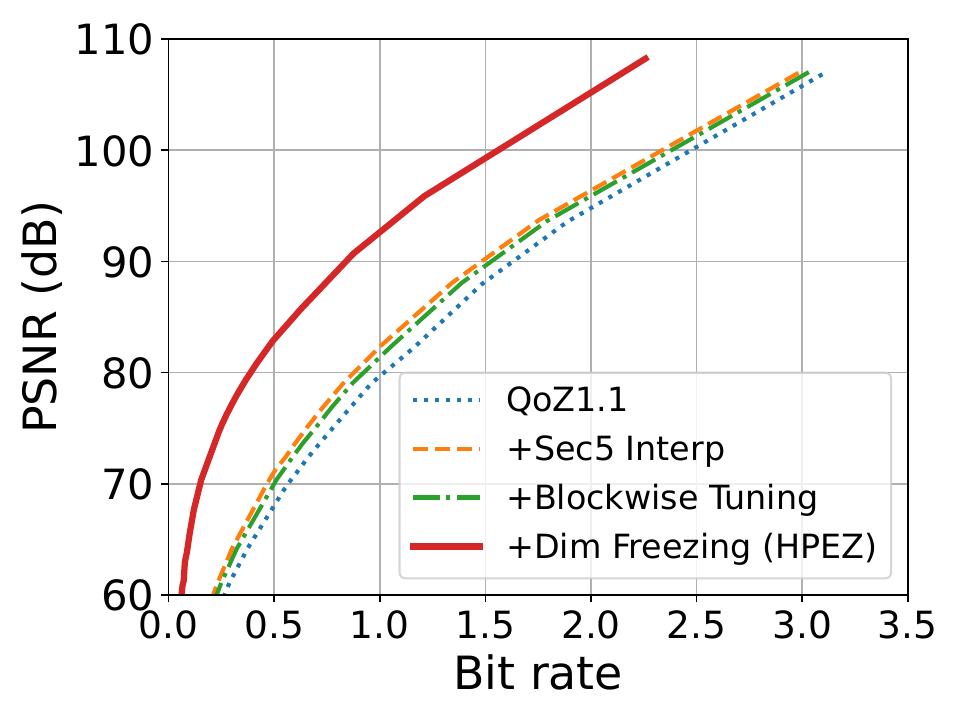}}
}
\hspace{-2mm}
\subfigure[{JHTDB}]
{
\raisebox{-1cm}{\includegraphics[scale=0.25]{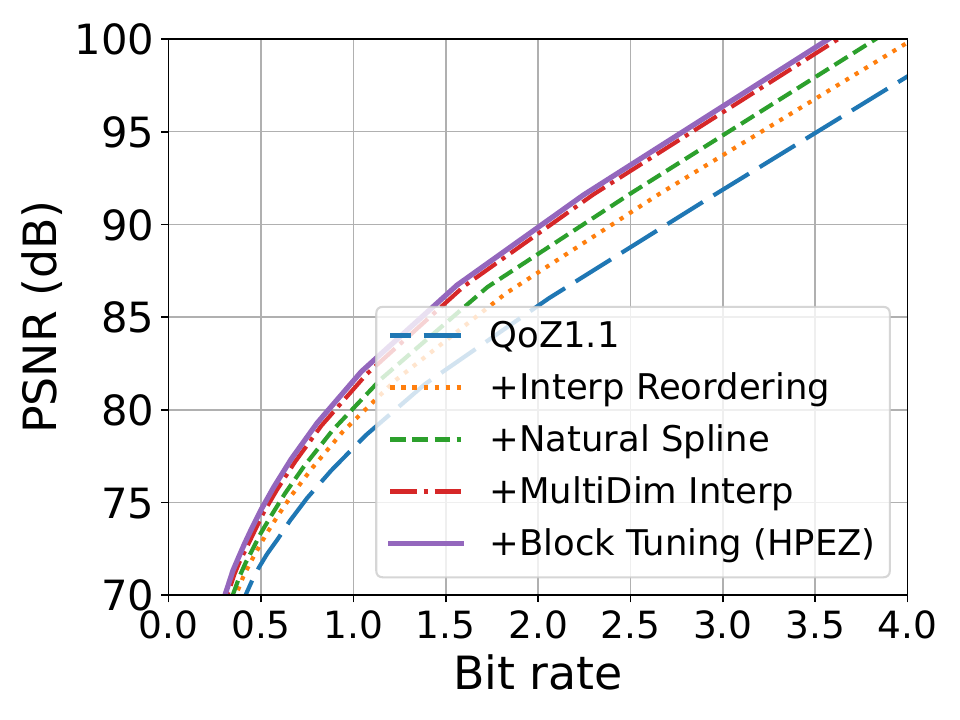}}
}
\hspace{-10mm}
\vspace{-4mm}

\hspace{-10mm}
\subfigure[{Miranda}]
{
\raisebox{-1cm}{\includegraphics[scale=0.25]{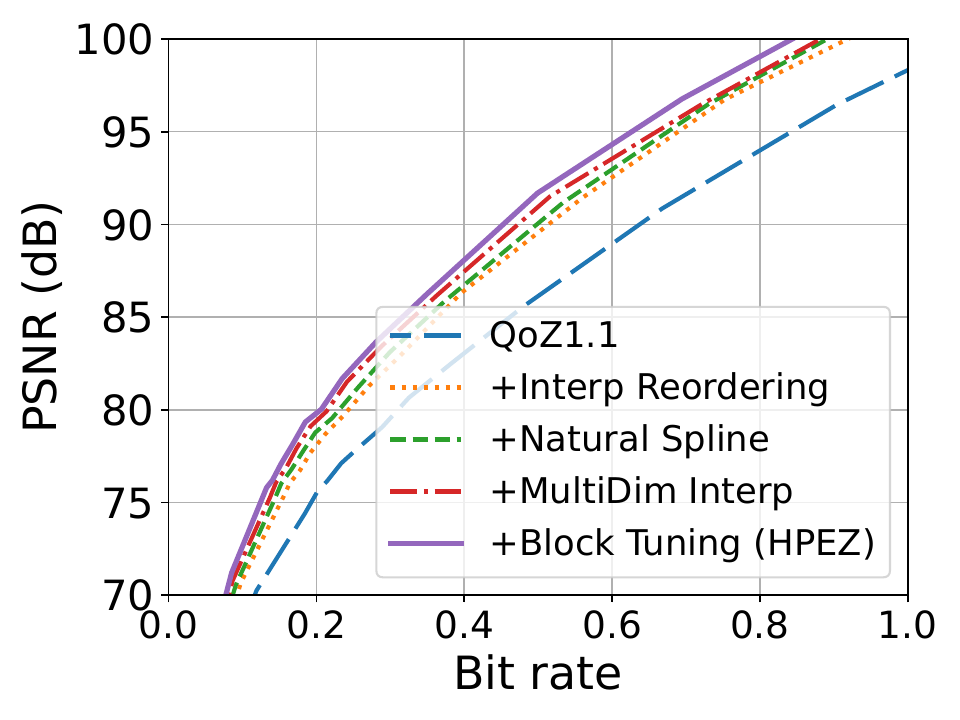}}%
}
\hspace{-2mm}
\subfigure[{SCALE-LetKF}]
{
\raisebox{-1cm}{\includegraphics[scale=0.25]{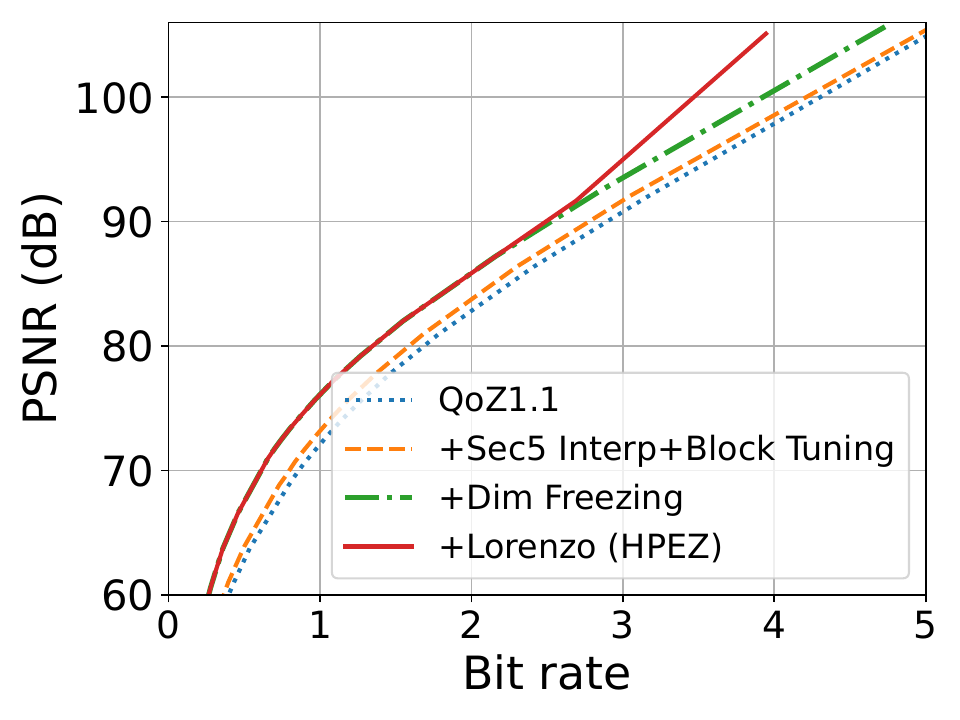}}%
}
\hspace{-2mm}
\subfigure[{SegSalt}]
{
\raisebox{-1cm}{\includegraphics[scale=0.25]{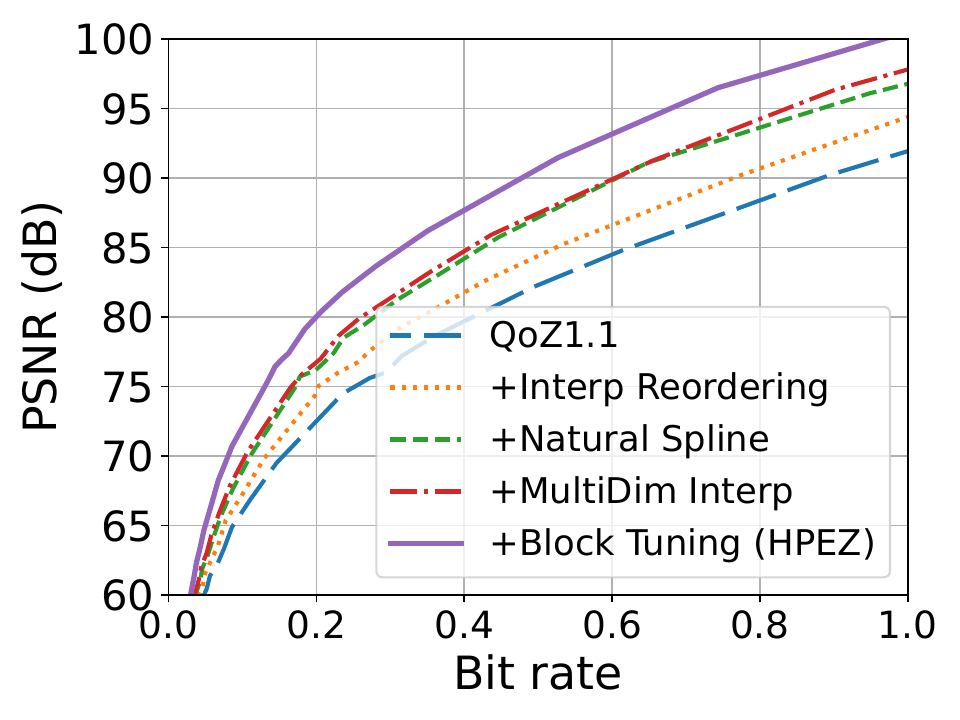}}%
}
\hspace{-10mm}
\vspace{-2mm}

\caption{Ablation study for rate-PSNR.}
\label{fig:eva-abla}
\end{figure}
Figure \ref{fig:eva-abla} shows the rate-PSNR plots 
on the RTM dataset with QoZ 1.1, HPEZ, and the different accumulations of new design components between them. For example, in Figure \ref{fig:eva-abla} (a) representing the results of the RTM dataset, there is a curve showing the rate-distortion of QoZ 1.1, a curve showing the rate-distortion of QoZ 1.1 plus the interpolation re-ordering, a curve for the aforementioned one plus the natural cubic spline, and so on, eventually to the complete HPEZ. For the RTM, JHTDB, Miranda, and SegSalt datasets (Figure \ref{fig:eva-abla} (a) (c) (d) (f)), analyzing the rate-distortion curves we can easily find that the HPEZ interpolation designs, including interpolation re-ordering (Section \ref{sec:reorder}), natural cubic spline (Section \ref{sec:natspline}), and multi-dimensional interpolation (Section \ref{sec:mdinterp}) all contribute to the improvement of rate-distortion. Additionally, block-wise interpolation tuning (Section \ref{sec:blockwise}) also plays an important role in optimizing the compression of their compression. Lastly, the effectiveness of multi-dimensional spline interpolation proved the generalization of Theorem \ref{theo:md} on diverse datasets and the integration of multiple interpolation optimization techniques.

In Figure \ref{fig:eva-abla} (b) and (e) corresponding to the CESM-ATM and SCALE-LetKF datasets, we can verify the effectiveness of dynamic dimension freezing (Section \ref{sec:freeze}) and the Lorenzo predictor (Section \ref{sec:lorenzo}) The dashed curve in Figure \ref{fig:eva-abla} (b) and (e) integrates all the interpolation designs in Section \ref{sec:interp}. Nevertheless, compared with QoZ 1.1 they have not refined the compression sufficiently. In contrast, the dynamic dimension freezing itself (the solid curves in Figure \ref{fig:eva-abla} (b) and the dash-dotted curve in Figure \ref{fig:eva-abla} (e)) has solely boosted the rate-distortion to a remarkable extent for those 2 datasets. Furthermore, comparing the solid curve and the dash-dotted curve in Figure \ref{fig:eva-abla} (e), leveraging the Lorenzo predictor has quite enhanced the compression quality of HPEZ in low-error-bound (i.e. high bit rates) cases.

Last, to examine the acceleration by the fast-varying-first interpolation described in Section \ref{sec:fvfi}, in Table \ref{tab:fvfi} we compare the sequential compression/decompression speeds of HPEZ between leveraging fast-varying-first interpolation or not (named as \textbf{HPEZ (w/o FVFI)}) in the table). Table \ref{tab:fvfi} clearly shows that the fast-varying-first interpolation has appreciably contributed to the performance of HPEZ, especially on the Miranda and JHTDB datasets.

\begin{table}[ht]
\centering
\footnotesize
  \caption {Compression Speeds (MB/s) with and without fast-varying-first interpolation ($\epsilon$=1e-3, i.e., $10^{-3}$)} 
  
  \label{tab:fvfi} 
  \begin{adjustbox}{width=0.75\columnwidth}
 
\begin{tabular}{|c|c|c|c|c|c|c|c|}
\hline
Type                           & Dataset         & CESM         & RTM          & Miranda      & SCALE        & JHTDB        & SegSalt      \\ \hline
\multirow{2}{*}{Cmp}   & HPEZ (w/o FVFI) & 132          & 139          & 101          & 124          & 87           & 134          \\ \cline{2-8} 
                               & HPEZ            & \textbf{140} & \textbf{142} & \textbf{140} & \textbf{129} & \textbf{105} & \textbf{141} \\ \hline
\multirow{2}{*}{Dcmp} & HPEZ (w/o FVFI) & 469          & 457          & 202          & 420          & 184          & 390          \\ \cline{2-8} 
                               & HPEZ            & \textbf{513} & \textbf{510} & \textbf{473} & \textbf{450} & \textbf{330} & \textbf{485} \\ \hline
\end{tabular}
\vspace{-2mm}
\end{adjustbox}
\end{table}
\section{Conclusion and Future Work}
\label{sec:conclusion}
In this paper, we propose HPEZ, an optimized interpolation-based error-bounded lossy compressor that supports quality-metric-driven auto-tuning and significantly improves compression ratio with low computation cost. 
The integration of advanced interpolation and auto-tuning designs in HPEZ has profoundly exploited the potential of the high-performance prediction-based compressor. In experiments, HPEZ achieves much better compression ratios and rate-distortion than existing high-performance error-bounded compressors with at most 140\% or 360\% compression ratio improvement under the same error bound or PSNR. HPEZ also over-performs existing error-bounded lossy compressors in data throughput tasks. In parallel data transmission experiments for distributed databases, HPEZ can achieve at most 40\% time cost reduction over the second bests, when compared with both high-performance and high-ratio error-bounded lossy compressors.

In the future, we plan to revise and develop HPEZ as follows: first, we will further optimize the speeds of HPEZ. Second, we will design more effective data prediction techniques for non-smooth data. Last, we will attempt to integrate compression techniques with a more flexible speed to adaptively tune the compression pipeline according to the requirements of compression speeds.

\section*{Acknowledgments}
This research was supported by the Exascale Computing Project (ECP), Project Number: 17-SC-20-SC, a collaborative effort of two DOE organizations – the Office of Science and the National Nuclear Security Administration, responsible for the planning and preparation of a capable exascale ecosystem, including software, applications, hardware, advanced system engineering and early testbed platforms, to support the nation’s exascale computing imperative. The material was supported by the U.S. Department of Energy, Office of Science, Advanced Scientific Computing Research (ASCR), under contract DE-AC02-06CH11357, and supported by the National Science Foundation under Grant OAC-2003709, OAC-2104023, OAC-2311875,  OAC-2311877, and OAC-2153451. We acknowledge the computing resources provided on Bebop (operated by Laboratory Computing Resource Center at Argonne).

\bibliographystyle{ACM-Reference-Format}

\bibliography{references}
\received{July 2023}
\received[revised]{October 2023}
\received[accepted]{November 2023}
\end{document}